\newtheorem{proposition}{Proposition}
\newlength{\dhatheight}
\newcommand{\tildehat}[1]{%
    \settoheight{\dhatheight}{\ensuremath{\hat{#1}}}%
    \addtolength{\dhatheight}{-0.35ex}%
    \tilde{\vphantom{\rule{1pt}{\dhatheight}}%
    \smash{\hat{#1}}}}
\begin{document}

\title{Deficit hawks: robust new physics searches with unknown backgrounds}
\author{Jelle Aalbers}\email{jaalbers@stanford.edu}\affiliation{Kavli Institute for Particle Astrophysics and Cosmology, Stanford University, Stanford, CA 94305, USA}\affiliation{SLAC National Accelerator Laboratory, Menlo Park, CA 94025, USA} 

\usdate
\date{\today}

\begin{abstract}
Searches for new physics often face unknown backgrounds, causing false detections or weakened upper limits.
This paper introduces the \emph{deficit hawk} technique, which mitigates unknown backgrounds by testing multiple options for data cuts, such as fiducial volumes or energy thresholds.
Combining the power of likelihood ratios with the robustness of the interval-searching techniques, deficit hawks could improve mean upper limits on new physics by a factor two for experiments with partial or speculative background knowledge.
Deficit hawks are well-suited to analyses that use machine learning or other multidimensional discrimination techniques, and can be extended to permit discoveries in regions without unknown background.~\href{https://github.com/JelleAalbers/deficithawks}{\faicon{github}}
\end{abstract}

\maketitle

\section{Introduction}

\begin{figure*}
    \centering
    \subfloat{{\includegraphics[height=6.46cm]{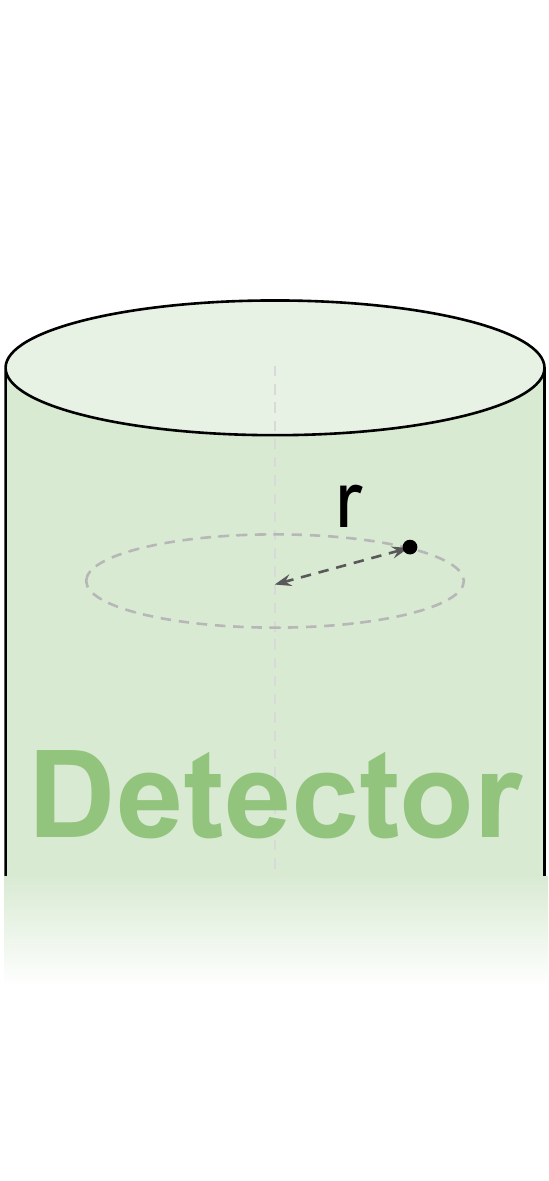}}}%
    \qquad
    \subfloat{{\includegraphics[height=6.46cm]{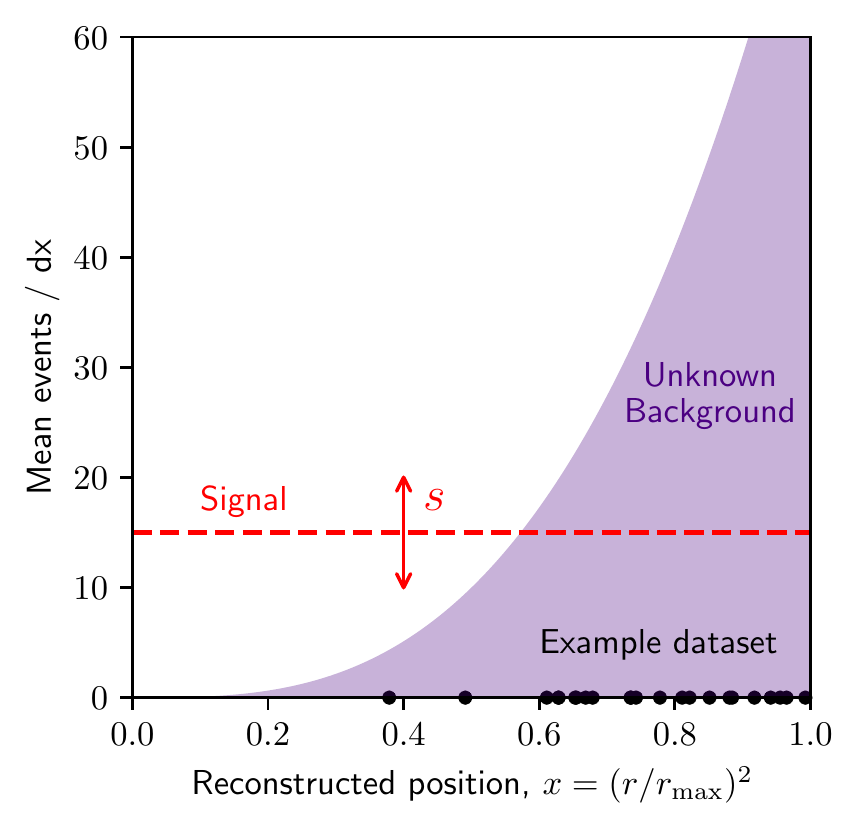} }}%
    \qquad
    \subfloat{{\includegraphics[height=6.46cm]{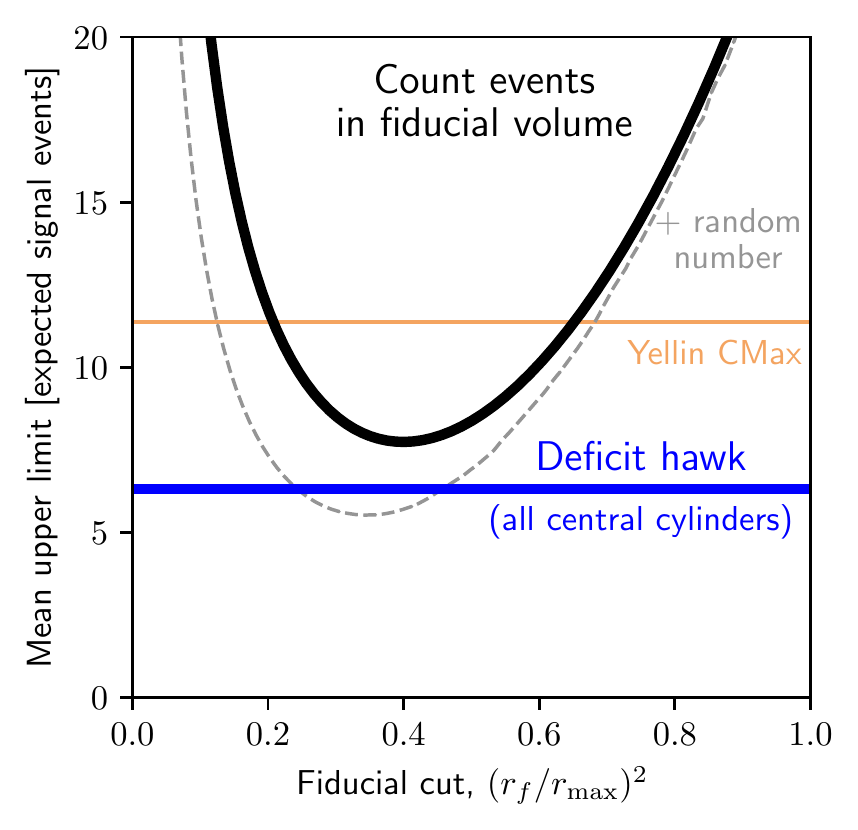} }}%
    \caption{A simple example that illustrates the deficit hawk method. Left: sketch of a long cylindrical detector, in which physicists observe the radial position $r$ of events from particle interactions. Center: A possible observed dataset (black dots), the homogeneous signal model (dashed red) on whose strength $s$ the experiment aims to set upper limits, and a possible unknown background (purple shaded), all as a function of $x = (r/r_\text{max})^2$, with $r_\text{max}$ the detector's radius. Right: Mean upper limits on the signal strength $s$, averaged over datasets generated from the unknown background shown in the middle plot. The solid black line is for counting events inside an inner cylinder with radius $r_f$, chosen before observing the data. The dashed black line is for adding a uniform(0,1) random number to this count. The solid blue line is for a deficit hawk that tests all central cylinders, and the thin orange line is for Yellin's CMax / optimum interval method, which tests all intervals in $x$. The latter methods require no fiducial volume choice, so they are represented by horizontal lines.
    }%
    \label{fig:simple_example}%
\end{figure*}

\subsection{Motivation}

Experiments often aim to constrain signals among partially or completely \emph{unknown backgrounds}. This challenge is common for (in)direct dark matter searches, neutrino experiments, and in astroparticle physics in general. The challenge takes many different forms:
\begin{itemize}
    \item \emph{Difficult regions} with possible extra backgrounds often exist near detector edges, at low energies, or at the start of observing runs. Should these be added to analyses or not?
    \item \emph{Data quality cuts} can remove backgrounds that would be difficult or tedious to model. But without a model, how do we choose good cut thresholds?
    \item \emph{Sideband analyses} may have poorly understood backgrounds and limited power to discriminate them from signals. %
    \item For \emph{new detector technologies}, backgrounds might only be known approximately, or not at all.
\end{itemize}

Established statistical techniques exist for several of these situations. 
With a parametric, but uncertain, background model, analysts can profile or marginalize over nuisance parameters \cite{pdg_review, discrete_profile}.
Without a parametric model, some backgrounds may still be constrained or mitigated using complementary (calibration) measurements \cite{safeguard, algeri}.
Finally, analysts that wish to make \emph{no} assumptions on the background can still set upper limits on new physics signals by counting observed events, or more efficiently, with Yellin's maximum gap or optimum interval methods \cite{optitv, optitv2}. 

However, many analysts find themselves in little-studied intermediate situations. 
Perhaps they suspect a general property of the unknown background -- for example, that it is larger at low energies. Or, they might have a model only for some background components, and know that additional components may exist. Finally, analysts may be comfortable with their background model in some, but not all, of the measured space.

These situations can be shoehorned into existing methods only at a price. 
Regions with too-uncertain backgrounds could be cut, but at a cost to sensitivity. 
Using many nuisance parameters weakens an analysis' physics reach, and still risks incorrect results if the true background model is different from the proposed parametric form(s).
Treating the background as fully unknown loses any chance of making discoveries; and while partial background models can be ignored, used to motivate cuts, or added to the signal model, neither is as effective as likelihood-ratio discrimination.

\subsection{Synopsis}
This work introduces the \emph{deficit hawk} technique. It allows analysts to exploit partial, uncertain, and even informal background knowledge. If unknown backgrounds appear, limits set by deficit hawks remain correct, and are weakened less than limits from standard techniques. Deficit hawks allow the use of likelihood ratios to powerfully discriminate \emph{known} backgrounds from signals. Finally, if a core part of the measurement has \emph{only} known backgrounds, deficit hawks naturally enhance the exclusion power of a two-sided likelihood-ratio analysis done in the core region.

Deficit hawks do not represent a single statistical method. The name is merely a convenient label for an idea -- to use likelihood ratio tests (or similar statistics) to test \emph{multiple regions} of the data, such as different fiducial volume cuts or energy thresholds.
Then, base an upper limit on the region that shows the most deficit-like result, while accounting for the fact that multiple regions were tested, much as in a look-elsewhere correction or a global significance computation. %

This idea is not original: it is the same principle that drives Yellin's optimum interval and maximum gap methods \cite{optitv, optitv2} -- which are examples of deficit hawks, and widely used in particle physics. More generally, statisticians have long studied how to draw valid conclusions from multiple tests \cite{bonferroni}.
This work's contribution is twofold: first, to show that choosing specific deficit hawks that leverage domain knowledge can give stronger limits (by around a factor two) than testing all intervals with a counting-based statistic; and second, to show that deficit hawks apply to situations well beyond one-dimensional experiments with fully unknown backgrounds.

\subsection{Example: avoiding a fiducial volume choice}
\label{sec:simple_example}
This section gives a basic example of how deficit hawks work, and why they are useful.
Figure \ref{fig:simple_example} sketches an experiment that uses a long cylindrical detector, in which scientists reconstruct the radial position $r$ of events from particle interactions. The goal is to set an upper limit on the strength $s$ (in expected number of events) of homogeneously distributed signals, such as rare decays or scatters of a weakly interacting particle. Backgrounds, e.g. from external radioactivity, are stronger near the detector edge, but analysts have no formal background model.

A conventional approach would be to pick an inner central cylinder with radius $r_f$, \emph{before} observing the data. Then, analysts can set an upper limit on $s$ based on the number of events with $r < r_f$ -- that is, they do a Poisson test inside a fiducial volume.
The performance of this method depends on $r_f$, and is shown in black in the right panel of figure \ref{fig:simple_example}, for the unknown background sketched in the center panel.
We will test many different unknown background distributions later in this work.

Clearly, the key challenge is to choose a good $r_f$ relative to the \emph{unknown} background. To do this, experiments might sacrifice and examine some fraction of their data for this purpose; or make an informed guess based on partial models, other observations, or theoretical arguments; or they could pick a round number and hope for the best.

The deficit hawk approach to this problem is to \emph{try many fiducial volumes}. The options should still be chosen before seeing the data. For example, we can try \emph{all possible values} of $r_f$. Clearly, we should not just compute upper limits for each choice, and publish the strongest one -- that would be as bad as choosing a fiducial volume after seeing the data. However, we can define our test statistic as `the most deficit-like observation among the tested volumes' (as formalized in section \ref{sec:deficit_hawks}). Next, we use Monte Carlo simulations of signal-only models to compute the distribution of this statistic, as we might do for any statistic, and use this to set valid upper limits.

Figure \ref{fig:simple_example} shows the mean upper limit from trying all possible $r_f$ values as the blue horizontal line. The deficit hawk performs similar to the optimal fiducial volume choice.
In fact, the deficit hawk performs \emph{better} than the best volume in this situation -- though narrowly, and only for a technical reason. Discrete statistics such as counts inherently perform slightly worse in frequentist limit setting than continuous statistics, as discussed in section \ref{sec:notation} and \cite{augmented}. If we add a random number between 0 and 1 to the count, such a modified Poisson test can beat the deficit hawk (if $r_f$ is chosen optimally), as shown in the dashed black curve in figure \ref{fig:simple_example}.

There is no need to test all $r_f$ values: the experiment might choose to try a few hand-picked fiducial volume options instead. 
But trying all $r_f$ options is simpler than it seems: as discussed in section \ref{sec:region_choice} and appendix \ref{sec:compute}, we need only test the $r_f$ values of observed events (and $r_\text{max}$), as cylinders with `free space' beyond them never give the most deficit-like observation.

Real experiments are more complicated than this example: they may have multiple observables on which they place cuts, or have \emph{known} backgrounds besides the potential unknown background, or may want to allow discovery claims rather than only upper limits. We will show how the deficit hawk method generalizes to these situations later in this work.

\subsection{Outline}

After section \ref{sec:notation} introduces basic concepts and notation, section \ref{sec:deficit_hawks} defines deficit hawks and discusses their basic features.
Section \ref{sec:region_choice} compares different choices of regions to test, and section \ref{sec:stat_choice} examines different statistics for comparing the regions. Next, section \ref{sec:known_bg} considers how to leverage partial background models; section \ref{sec:underflucts} discusses background underfluctuations; and \ref{sec:detections} proposes a way to allow discovery sensitivity if some region is known to be free of unknown background.
Section \ref{sec:discussion} discusses potential objections and limitations, and section \ref{sec:conclusions} concludes with a summary and suggestions for further research.
The appendices prove some properties of likelihood ratios used in the text (\ref{sec:proofs}), discuss computational costs and ways to reduce it (\ref{sec:compute}), and show performance tests for additional background scenarios (\ref{sec:extra_scen}). 

\section{Basic theory}
\subsection{Background and conventions}
\label{sec:notation}
This section reviews likelihood inference techniques used in particle physics, and defines notation used for the rest of this work. For more extensive reviews, see e.g.~references \cite{pdg_review, asymptotic, nature_review, severini}.

Take an experiment that observes a dataset $D$ of $N$ events. 
For each event, the experiment records observed quantities such as energy, time, or reconstructed position -- that is, events are vectors ${\bf{x}} \in X \subseteq \mathbb{R}^m$ for some $m$. The events are produced by Poisson processes, such as radioactive decays. The experiment aims to constrain a parameter $s$, such as a cross-section or decay constant for which positive values indicate new physics. For simplicity, in most examples below, we let $s$ equal the expected number of signal events, $\mu_\text{sig} = s$, but this is not an assumption of the method.

To set frequentist confidence intervals on $s$, analysts must choose a \emph{test statistic} $T(s|D)$, which summarizes the data $D$ in a single number for each possible $s$, and an \emph{interval construction}, which translates (`inverts') the observed $T(s)$ to a confidence interval on $s$. A powerful and commonly used test statistic is the (unsigned) log likelihood ratio \cite{neyman_pearson, pdg_review}:
\begin{equation}
\label{eq:original_plr}
u(s | D) = -2 \ln \frac{L(s| D)}{L(\hat{s}| D)}
\end{equation}
Here, $L$ is the likelihood function of the data and $\hat{s}$ denotes the `best fit', i.e.~the $s$ that maximizes $L(s)$ in the domain of $s$-values that the model physically allows. For example, $L$ might be an (extended) unbinned likelihood:
\begin{equation}
\label{eq:unbinned_l}
\ln L = \ln \mathrm{Pr} - \mu + \sum_{\text{events} \, i} \ln r(\bm{x}_i)
\end{equation}
with $\mathrm{Pr}$ an optional `prior' term that represents external constraints independent of the data, $r(\bm{x}) = \mu \cdot \mathrm{PDF}(\bm{x})$ the differential rate of events at $\bm{x}$, and $\mathrm{PDF}(\bm{x})$ the probability density of events. All of $\mathrm{Pr}$, $\mu$ and $r$ are generally functions of $s$. We will use unbinned likelihoods throughout this work, though binned likelihoods can be used equally well.

If $s$ controls a signal strength, it is useful to distinguish excesses and deficits. We will use a \emph{signed} likelihood ratio:
\begin{equation}
\label{eq:signed_plr}
t(s) = \sqrt{u(s)} \cdot \mathrm{sign}(\mu(\hat{s}) - \mu(s)),
\end{equation}
which is positive for excesses ($\mu(\hat{s}) > \mu(s)$) and negative for deficits. The blue line in figure \ref{fig:neyman} shows an example: like many observations, this observation is an excesses for low $s$ and a deficits for higher $s$.
Many alternate signed likelihood definitions exist that behave similarly in cases of interest here: e.g. omitting the square root and using different sign conventions \cite{knut_fc}, setting the likelihood ratio to 0 for excesses when only upper limits are of interest \cite{asymptotic, pdg_review}, or using $\mathrm{sign}(\hat{s} - s)$ for the sign function \cite{severini}. Under some regularity conditions, and for sufficiently large signals, $t$ is distributed as a standard Gaussian \cite{severini} -- but we will not rely on this asymptotic approximation anywhere in this work.

\begin{figure}
    \centering
    \includegraphics[width=\columnwidth]{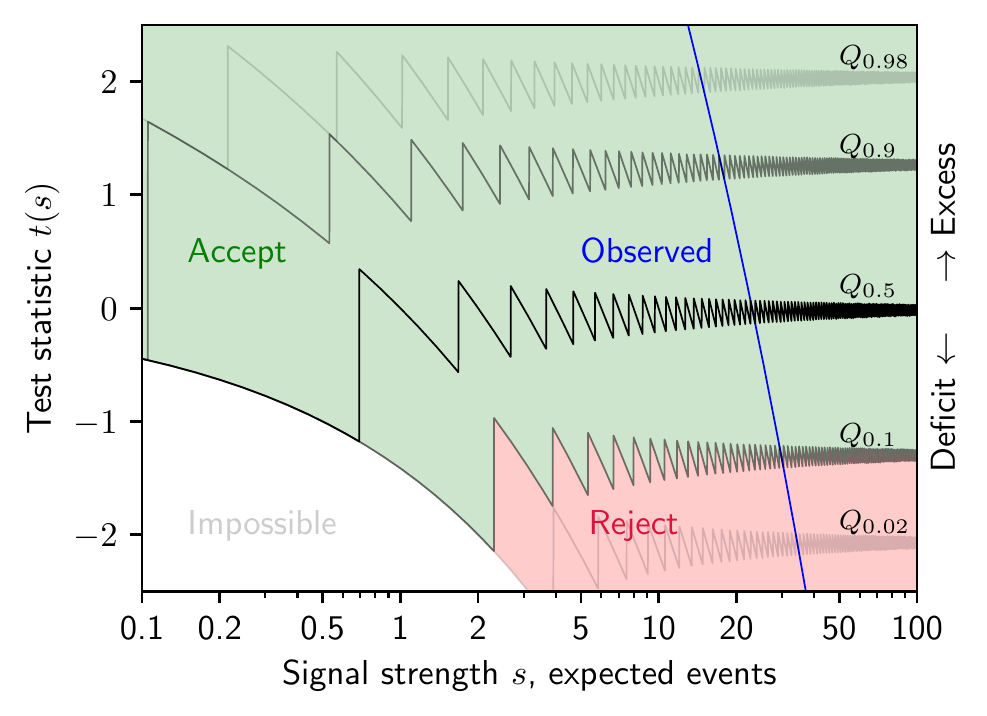}
    \caption{Illustration of setting 90\% confidence upper limits with a signed likelihood ratio, for the example experiment shown in figure \ref{fig:simple_example} (with no fiducial volume cut). The blue line shows the observed value of $t(s)$. Thin black lines show quantiles $Q_\alpha$ of the distribution of $t$, assuming for each $s$ that it is the true signal strength. Rapid jumps are due to the discreteness of $N$. If $t(s)$ falls in the green-shaded `Accept' region, $s$ is included in the confidence interval. For 90\% confidence level upper limits on signal strengths using $t$, the `Accept' region sits at and above the tenth percentile $Q_{0.1}$. Thus, on the observed dataset, we get the limit $s \lesssim 30.5$, as that is the highest $s$ for which $t(s) \geq Q_{0.1}(s)$. Since there are no backgrounds or priors, $t=t_0$, and the test in this example is equivalent to a Poisson test.}
    \label{fig:neyman}
\end{figure}

If there are no priors or known backgrounds, $t(s)$ becomes a simple expression $t_0(s)$ that depends only on $s$ and $N$. More precisely, if we can choose $\mathbf{x}$ coordinates so that $\mathrm{PDF}(\mathbf{x}) = 1$ at all $\mathbf{x}$, then $\ln L = - \mu + N \ln \mu$, $\mu(\hat{s}) = N$, and
\begin{align}
\label{eq:t0}
u(s)/2 &= \; \mu - N + N \ln( N/\mu) \nonumber \\
t(s) = t_0(s) &= \mathrm{sign}(N - \mu(s)) \sqrt{2}\nonumber \\
         &\;\;\;\;\cdot \sqrt{\mu(s) - N + N \ln(N/\mu(s))} ,
\end{align}
with $0 \ln 0$ understood as 0.

An interval construction translates the observed test statistic values into a confidence interval on $s$. Frequentist interval constructions should satisfy \emph{coverage}: for 90\% confidence intervals, the true $s$ must be inside the interval in $\geq90\%$ of repeated experiments. Equivalently, false exclusions of the true $s$ may happen in $\leq 10\%$ of trials. 
In this work, we will say that a method `has coverage' even if it includes the truth \emph{more often} than the confidence level indicates, a property known as \emph{overcoverage}. Procedures that explicitly and intentionally over-cover are common in physics \cite{cls,pcl} and produce conservative claims about nature.

Constructing intervals by a \emph{Neyman construction} \cite{neyman_construction}, as illustrated in figure \ref{fig:neyman}, guarantees coverage. We assign each possible $(s, T)$ to combination to an \emph{Accept} or \emph{Reject} region, ensuring $T(s)$ has $\geq 90\%$ probability to fall in the \emph{Accept} region at every $s$. After observing $T(s)$, our interval consists of the $s$ values for which $T(s)$ is in the `Accept' region. If this set of values is not an interval, we would conservatively report the shortest interval that includes all of the set.

For 90\% confidence level upper limits, we should reject the 10\% most deficit-like observations for each $s$, as shown in figure \ref{fig:neyman}. 
For two-sided intervals, we instead reject some extreme excesses as well as some deficits; there are many ways of balancing these two \cite{feldman_cousins, cls, pcl, knut_fc}.
Regardless, we need the \emph{quantile function} $Q_q(s)$ (inverse cumulative distribution function) of the test statistic at each $s$, with $q \in [0,1]$; sketched as thin black lines in figure \ref{fig:neyman}. For 90\% confidence upper limits, we need the tenth percentile $Q_{0.1}(s)$; more generally, for $\mathrm{CL}$ confidence level limits we need $Q_\alpha(s)$ with $\alpha = 1 - \mathrm{CL}$. The threshold $Q_\alpha(s)$ can be estimated by computing $T(s)$ on simulated datasets with different $s$. Asymptotic approximations for $Q_\alpha$ are sometimes available \cite{asymptotic, nature_review}.

Neyman constructions with discrete statistics such as $N$ tend to give conservative results, i.e.~weaker upper limits. This is because the threshold $Q_\alpha(s)$ will always lie in a nonzero probability mass, which has to be assigned to the `Accept' region entirely. To mitigate this, $N$ can be made continuous artificially by adding a random number between 0 and 1 to $N$ \cite{augmented}. We will use this to fairly compare $N$ to continuous statistics in two comparisons below, in figures \ref{fig:regioncomp_plaw} and \ref{fig:statscomp_plaw}. Deficit hawks do not use this. %

\subsection{Deficit Hawks}
\label{sec:deficit_hawks}

\begin{figure*}
    \centering
    \includegraphics[width=\linewidth]{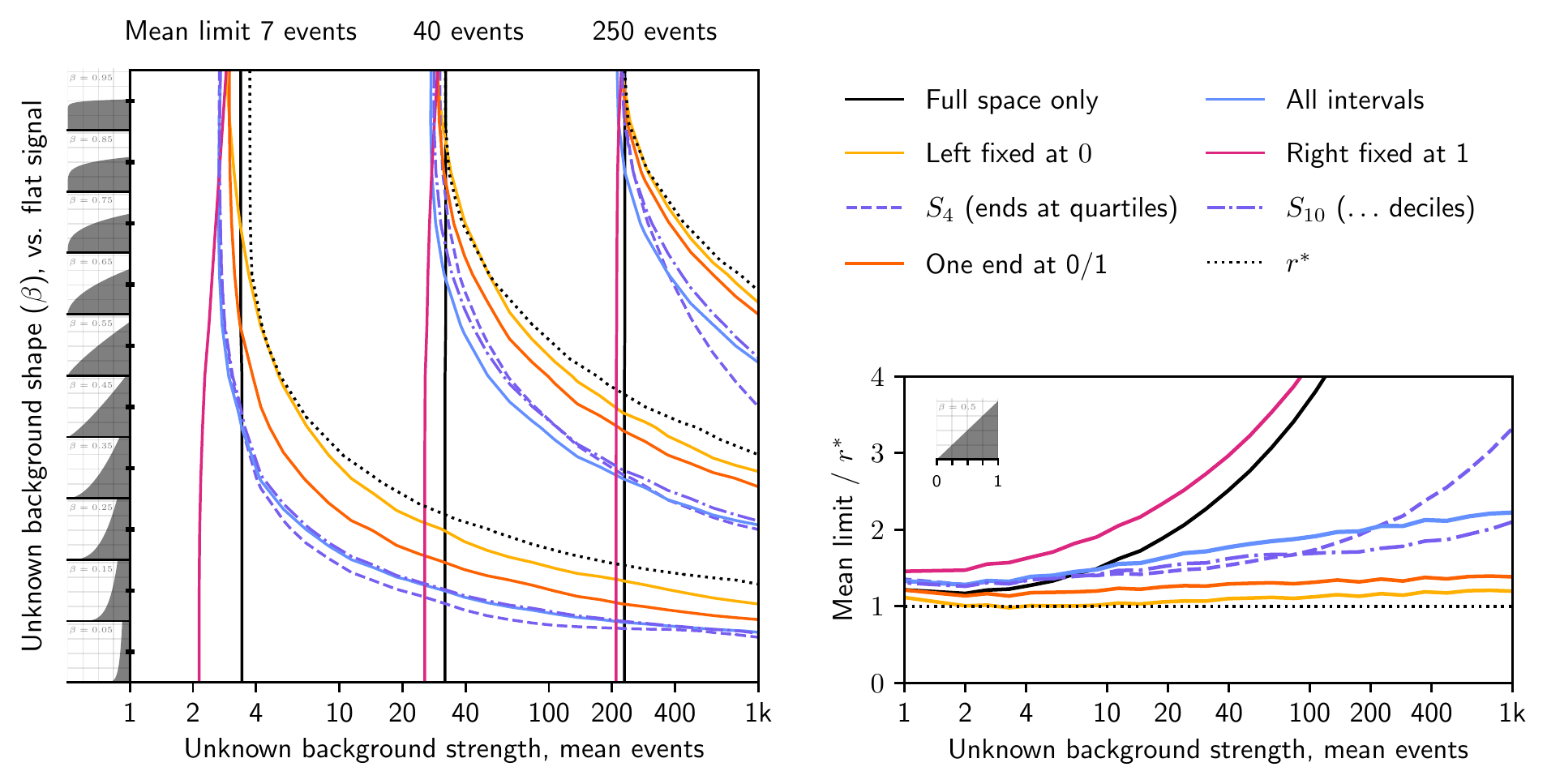}
    \caption{
    Performance of deficit hawks that test different region sets, for different unknown background strengths (horizontal axes), and for the left panel, different shapes (vertical axis, sketched in coordinates where the signal is flat).
    \emph{Left}: Successive contours connect, from left to right, scenarios at which the mean 90\% confidence level upper limit on the signal strength is 7, 40, and 250 events, respectively.
    \emph{Right}: Mean upper limits for different strengths of the triangular unknown background (halfway up the left panel), relative to testing only $r^*$.
    Colors indicate different region sets: blue for all intervals, gold for $(0, \ldots)$ and red for $(\ldots, 1)$ intervals, orange for the union of the previous two, and dashed/dash-dotted purple for $S_4$/$S_{10}$, respectively (i.e.~intervals ending at quartiles/deciles). The solid black line is for a Poisson test in the full space; the dotted black line is for a Poisson test with an added uniform(0,1) random number in $r^*$, the (unknown) optimal region for each particular unknown background. Small wiggles in some lines are due to finite Monte Carlo statistics.
    }
    \label{fig:regioncomp_plaw}
\end{figure*}

A \emph{deficit hawk} statistic $H$ is defined as
\begin{equation}
    \label{eq:deficit_hawks}
    H[T,R] = \min_{r\in R} T_r, 
\end{equation}
with $R = \{r \subseteq X\}$ a (possibly infinite) set of (generally overlapping) regions or cuts of the measured space, and $T_r$ is a test statistic $T$ computed using only the region $r$.
In the example of figure \ref{fig:simple_example}, $R$ is the set of intervals $x\in (0, \ldots)$, representing all central cylindrical fiducial volumes.
For most of this work, we choose $T = t$, the signed likelihood ratio (eq.~\ref{eq:signed_plr}), though section \ref{sec:stat_choice} will consider alternative statistics.

Informally, deficit hawks `choose' a region in which $T$ is low, i.e.~which appears to show a strong deficit. If an unknown background is strong, and appears less in one region than others (compared to the expected signal), this less-affected region will show the strongest deficit, and will be chosen by $H$.
Thus, limits set using $H$ remain strong if the unknown background is sufficiently distinguishable from the signal. We will see this quantitatively in Monte Carlo performance tests below.

This robustness is not free. If there is little or no unknown background, $H$ may pick a region showing a chance deficit, i.e.~an underfluctuation of the known background or true signal, if either of these are present. The more regions we test, the more often chance deficits are selected, and the lower (more extreme) the threshold $Q_{\alpha}(s)$ below which we can exclude $s$.
Thus, for best results, we should test a small set of regions, in which at least one will likely have comparatively little unknown background. Section \ref{sec:region_choice} will examine the effects of different region choices in detail.

Finally, we must ensure that deficit hawks give \emph{correct} upper limits even if unknown backgrounds appear. Because we set limits with a Neyman construction, we have guaranteed coverage for the case of \emph{no} unknown background. If an unknown background appears, it can only add events to the data. Thus, as long as $T$ always \emph{increases} (or stays constant) when events are added, unknown backgrounds can only weaken upper limits. For simple statistics such as $t_0$, this is easy to check by differentiation. Thus, unknown backgrounds do not destroy coverage; they only increase the overcoverage -- that is, deficit hawks are conservative.

Appendix \ref{sec:proofs} proves that signed likelihoods $t(s)$ only increase (or stay constant) when events are added, provided that higher $s$ values increase (or leave constant) the expected events in every subset of the space of measured properties $X$. This is clearly true for signal strength parameters such as cross-sections. Thus, deficit hawks can use intricate likelihoods, including known backgrounds (if modeled correctly), external constraints, and multiple discrimination dimensions. Nuisance parameters are generally not allowed, though see section \ref{sec:detections} and \ref{sec:future_research}.

\subsection{Choosing regions}
\label{sec:region_choice}

Deficit hawks should test regions that may each be optimal (or at least suitable) for different possible unknown backgrounds.
If we know nothing of the background, any region could be optimal -- but we should still exercise restraint in picking the region set $R$ to test, as testing too many regions lowers the critical value $Q_\alpha$. 

This is an opportunity to leverage domain knowledge: we often know, or at least suspect, that some backgrounds are more likely than others.
For instance, in the example of figure \ref{fig:simple_example}, we tested different maximum radii, but not different \emph{minimum} radii, or unions of disconnected intervals.
It is \emph{safe} to use background speculations to pick a set of regions to test: if the background is different than we thought, e.g.~it somehow \emph{does} appear mostly in the center of the detector, limits will be weaker, but still correct. This is no different, and no more subjective, than choosing a single fiducial volume or data quality cut, which experiments routinely do -- it is merely more flexible.

Figure 3 compares the performance of deficit hawks that test different region sets $R$, for various unknown background sizes and shapes. As in figure \ref{fig:simple_example}, the signal is flat in $x\in [0,1]$, and the deficit hawks use $T=t=t_0$. The right panel tests different strengths of one background shape (triangular in $x$), while the left panel shows a grid scan over background sizes and shapes -- more precisely, power laws $\mathrm{PDF}(x) \sim x^{-1 + 1/\beta}$, with $\beta \in (0,1]$ plotted along the vertical axis. Contours connect scenarios that give the same mean upper limit, so methods whose curves lie further to the right tolerate stronger backgrounds. The background is indistinguishable from the signal at $\beta=1$ (top), triangular at $\beta=0.5$ (middle), and approaches a delta function for $\beta\rightarrow0$ (bottom). Figure \ref{fig:simple_example} used $\beta = 0.25$ with 20 mean unknown background events.

The dotted lines in figure \ref{fig:regioncomp_plaw} show the performance of testing the optimal single region $r^*$ for each unknown background, using a Poisson test with an added random number. In figure \ref{fig:simple_example}, $r^*$ would be at the minimum of the dashed black curve. Since the background is unknown, $r^*$ is unknown -- but it is still a useful ceiling on the performance of deficit hawks \footnote{It is unclear if testing $r^*$ with a random-augmented Poisson test is better than any other deficit hawk. Deficit hawks that test multiple regions may essentially do the `smoothing' of the Poisson test with data outside $r^*$, which has more information than a purely random number. This could explain why the `Left fixed at 0' deficit hawk in figure \ref{fig:regioncomp_plaw} seems to slightly outperform the $r^*$ test for some unknown backgrounds. However, the effects are small enough that we cannot discount numerical errors. %
}.

As in figure \ref{fig:simple_example}, we see that testing intervals of the form $x \in [0, \ldots)$, i.e. trying different maximum values, usually performs about as good as testing only the unknown ideal region $r^*$.
Thus, even if we roughly knew the strength and shape of the unknown background, there would be relatively little benefit in testing an even smaller set of regions.
Unless the background is small or very similar to the signal, testing \emph{all} intervals instead is clearly worse, and testing only the full space is worse still.

\begin{figure}
    \centering
    \includegraphics[width=\columnwidth]{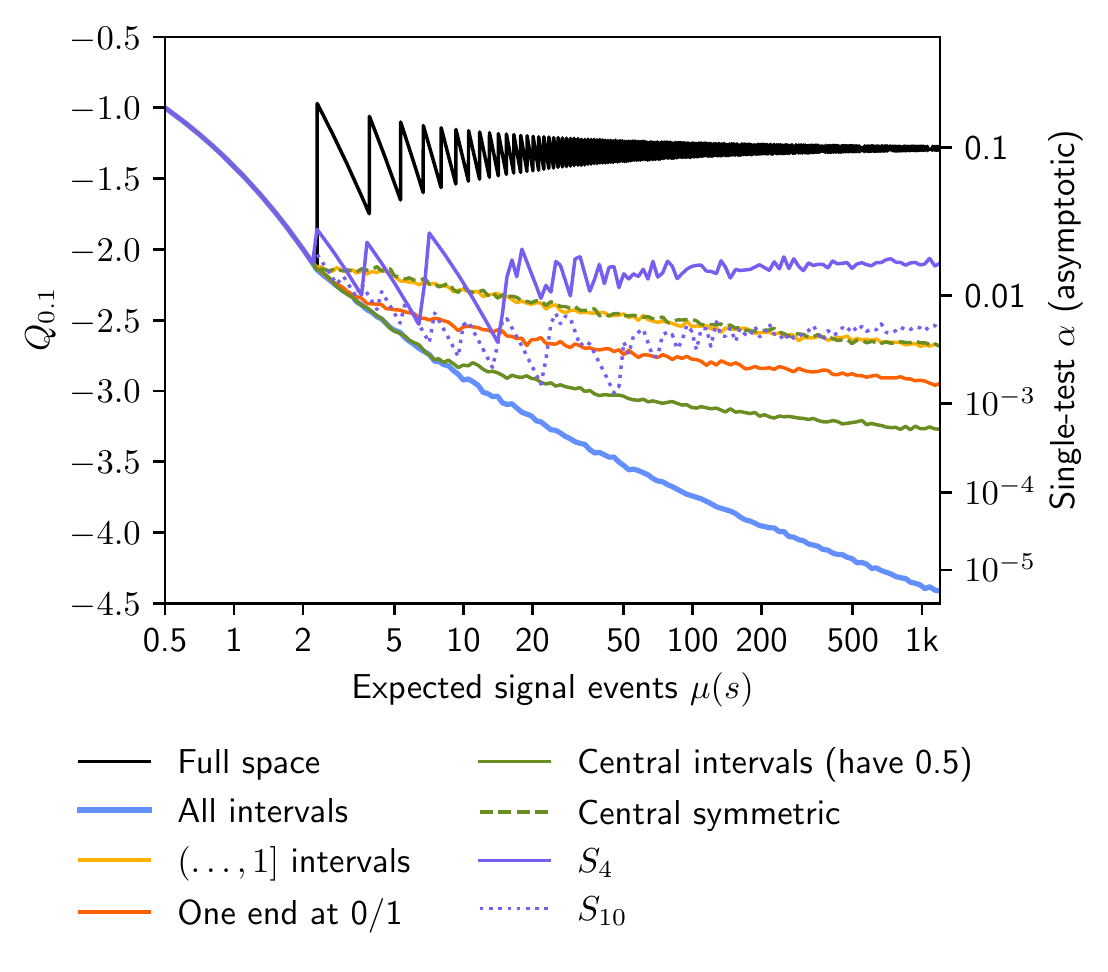}
    \caption{Threshold values $Q_{0.1}$ for deficit hawks with different region choices, including those shown in figure \ref{fig:regioncomp_plaw} and appendix \ref{sec:extra_scen}, for an experiment without known background using a deficit hawk with $T = t = t_0$. The right axis shows the equivalent $\alpha$ for a single asymptotic ($\mu(s) \rightarrow \infty$) test. Quick jumps in the black and purple lines are due to the discreteness of the test statistic; tiny wiggles in the lines are due to finite Monte Carlo statistics and the finite grid in $\mu(s)$. The black curve is calculated exactly.
    }
    \label{fig:q_alpha}
\end{figure}

Most region choices perform similarly when there is only a true signal.
That may seem surprising, given that testing many regions significantly lowers $Q_\alpha$, as shown in figure \ref{fig:q_alpha}. However, if we test more regions, each dataset also has more opportunities to cross $Q_\alpha$. Only when a sufficiently large and localized unknown background makes many of these tests futile does the low $Q_\alpha$ start to bite. This is why testing all intervals performs worse in the lower half of figure \ref{fig:regioncomp_plaw}. There is no single measure of the `penalty' to sensitivity for testing more regions: that, too, depends on the unknown background.

Testing sets of regions such as `all intervals' may seem computationally expensive, but in practice, we need only test intervals ending at observed events, 0 or 1. As mentioned in section \ref{sec:simple_example}, an interval with empty space beside it will give a lower $t$ than the same interval with the empty space included. Appendix \ref{sec:proofs} and \ref{sec:compute} show a proof of this property.

Even testing only regions that border on events might be too expensive if many events are observed. Alternatively, we could test the set $S_k$ of all intervals with endpoints at the $k$-quantiles of the signal model, as proposed in \cite{optitv2}. For example, in coordinates where the signal is flat on $[0,1]$, $S_4$ consist of the ten intervals $[0, 0.25]$, $[0.25, 0.5]$, $[0.5, 0.75]$, $[0.75, 1]$, $[0, 0.5]$, $[0.25, 0.75]$, $[0.5, 1]$, $[0, 0.75]$, $[0.25, 1]$ and $[0, 1]$ ending on quartiles of the signal distribution.
Figure \ref{fig:regioncomp_plaw} shows the performance of testing $S_4$ and $S_{10}$.
As $k \rightarrow \infty$, testing $S_k$ approximates testing all intervals, and does so more quickly at low $s$.

Figure \ref{fig:regioncomp_plaw} only tested backgrounds that grow (or equivalently decay) monotonically, e.g.~towards a detector edge or over time, which was rational for the example in figure \ref{fig:simple_example}.
In other cases, e.g.~searching for a (peaked) signal on top of a (flatter) background in a dimension like energy, unknown backgrounds may appear on both sides of the signal. In that case, we could e.g.~test all intervals that include some central point, moving symmetrically or asymmetrically outwards. Appendix \ref{sec:extra_scen} compares the performance of different region choices in this and several other scenarios.

\subsection{Choosing a statistic}
\label{sec:stat_choice}
This subsection examines alternatives to using the signed likelihood ratio $t$ for the per-region statistic $T$ in deficit hawks.

The statistic $T$ in a deficit hawk ranks the regions, and the lowest-scoring region `wins'. The exact values $T$ produces are irrelevant -- if two statistics agree on how to rank all possible observations, they yield the same upper limits.

Any reasonable statistic used in a deficit hawk should prefer large regions with few observed events. Formally, we might call a statistic $T$ \emph{countike} if, for any dataset $D$ and region $r \subseteq X$ in which $T_r$ is computed,
\begin{enumerate}
    \item $T_r$ \emph{increases} (or stays constant) if an extra event is added to $D$, and
    \item $T_r$ \emph{decreases} (or stays constant) when $r$ is expanded by some space without observed events.
\end{enumerate}
Section \ref{sec:deficit_hawks} discussed how the first condition guarantees coverage even if unknown backgrounds appear. The second condition enables the optimization discussed in section \ref{sec:region_choice} of testing only regions bounded by events (when testing a continuum of regions).
The count $N$ of observed events is the archetypical countlike statistic. 
To show $t_0$ is countlike, one can check $\partial t_0 / \partial N \geq 0$ and $\partial t_0 / \partial \mu \leq 0$.
Appendix \ref{sec:proofs} proves that the general signed likelihood ratio $t(s)$ is countlike. As a counterexample, the unsigned likelihood ratio $u(s)$ is not countlike: since it takes increasingly high values for both stronger deficits and stronger excesses, adding events to $D$ can lower $u(s)$ for some $s$.

Some countlike statistics may prefer large regions, even if they have a bit more events, while others prefer few observed events, even if it requires choosing a smaller region. We might call these \emph{soft} and \emph{hard} statistics, respectively.
Very soft statistics essentially only probe the full space, and very hard statistics probe only empty regions (`gaps'). 
If only gaps are tested (and there are no known backgrounds), we can use $T = -\mu$ instead of more complicated statistics, as it does not matter how the statistic scores events.

\begin{figure}
    \centering
    \includegraphics[width=\columnwidth]{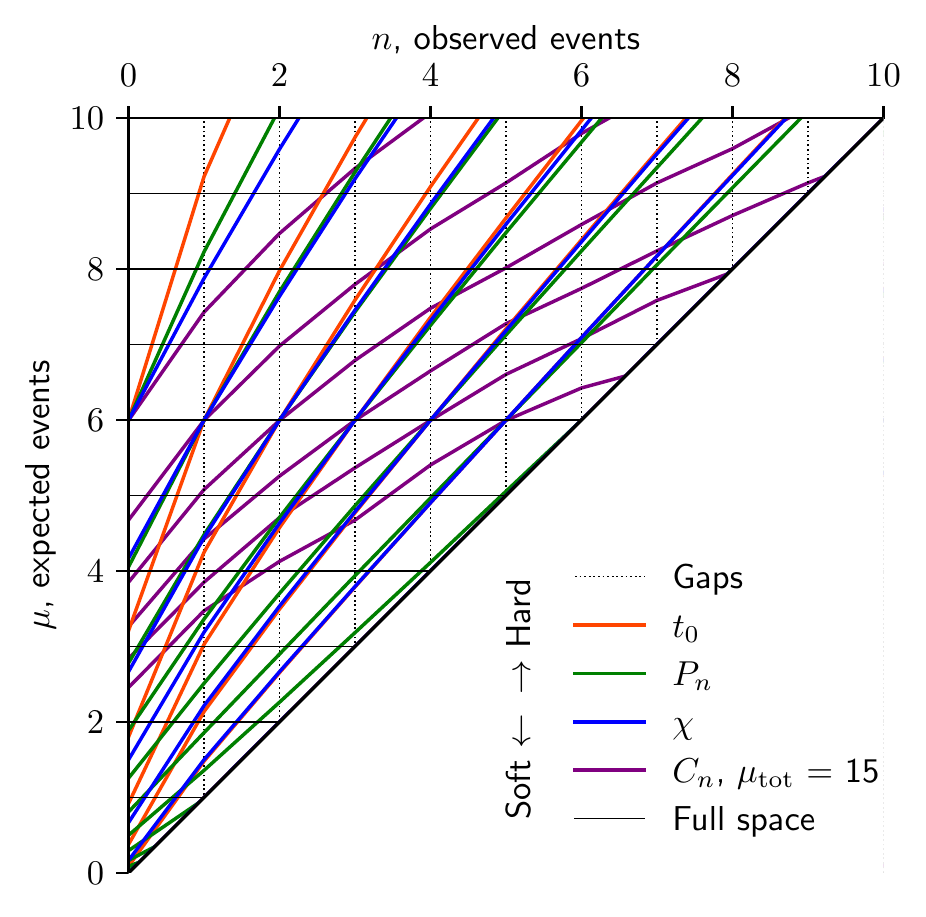}
    \caption{Indifference curves for different statistics, versus the count of observed events $n$ and expected $\mu$ events in a tested region. Along any one curve, the statistic takes the same value. Curves intersecting $(\mu = 6, n)$ with different $n$ are drawn. $C_n$ also depends on the \emph{total} expected events $\mu_\text{tot}$; we show the indifference curves for $\mu_\text{tot} = 15$ events.}
    \label{fig:indiff}
\end{figure}

\begin{figure}[t]
    \centering
    \includegraphics[width=\columnwidth]{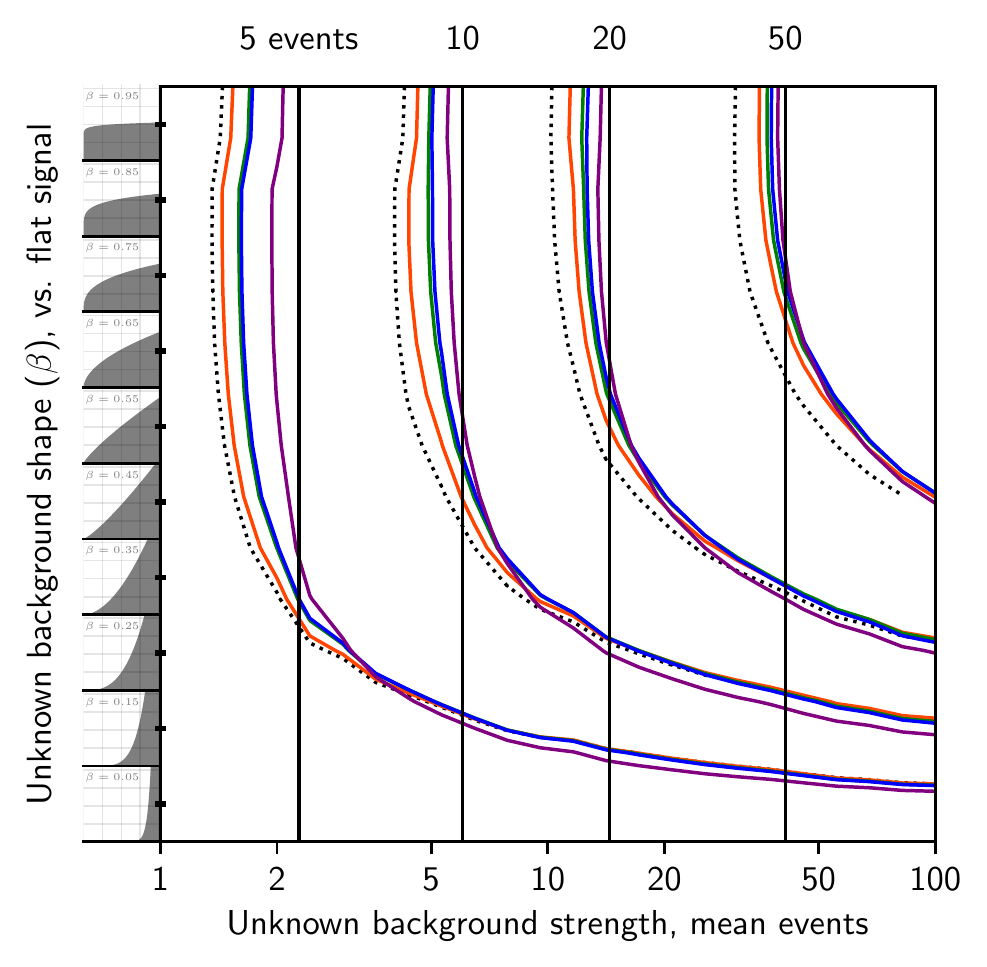}
    \caption{As figure \ref{fig:regioncomp_plaw}, but comparing the different statistics shown in figure \ref{fig:indiff}, with the same color coding.
    Successive contours connect, from left to right, scenarios at which the mean upper limit on the signal strength is 5, 10, 20, and 50 events, respectively. 
    The colored lines test all intervals, the solid black line is for counting events in the full space (with a uniform(0,1) random number added to the count), and the dotted line is for testing empty intervals / gaps (with $T=-\mu$, or equivalently any countlike statistic that strictly decreases with $\mu$). Small-scale variations reflect the finite $(16 \times 21)$ grid of tested models and the finite number (1500) of Monte Carlo trials used to estimate the mean limit at each point.
    }
    \label{fig:statscomp_plaw}
\end{figure}

Three statistics in particular have been used in earlier literature. 
The `$p_\text{max}$' method introduced in the appendix of \cite{optitv} is a deficit hawk that tests all intervals with the Poisson CDF of $N$:
\begin{equation}
    \label{eq:pn}
    P_n(s) = P(N \leq n | \mu(s)),
\end{equation}
with $n$ the observed value of $N$ and $\mu$ the expected number of events. Using
\begin{equation}
\chi = (N-\mu)/\sqrt{\mu} ;
\end{equation}
was suggested in \cite{optitv2}. Finally, Yellin's optimum interval method is a deficit hawk that tests all intervals with $C_n$, a statistic detailed in \cite{optitv}. $C_n$ depends on the expected events in the full space $\mu_\text{tot}$, besides the $\mu$ and $n$ of the tested region. %

Figure \ref{fig:indiff} visualizes how these statistics rank observations, using `indifference curves' that connect  $(n, \mu)$ combinations on which a statistic takes the same value. Hard statistics have steep curves, since they require much higher $\mu$ to offset an increase in $n$. 
Testing only gaps correspond to infinitely steep indifference curves: any $n>0$ will disqualify a region.
Testing only the full space corresponds to horizontal curves: the region with highest $\mu$ is always preferred.
Clearly, $P_n$ and $\chi$ are slightly softer but overall quite similar to $t_0$. $C_n$ is much softer at the $\mu_\text{tot}$ shown; at higher $\mu_\text{tot}$ it hardens and approaches $P_n$.

Figure \ref{fig:statscomp_plaw} compares the sensitivity of deficit hawks that use the statistics discussed above, similar to figure \ref{fig:regioncomp_plaw}. Comparing figures \ref{fig:statscomp_plaw} and \ref{fig:regioncomp_plaw}, note that figure \ref{fig:statscomp_plaw} has a smaller horizontal scale, and that its contours are drawn at more closely spaced levels. Clearly, which region set $R$ to test is far more important than which statistic $T$ to use for comparing the regions.

The small differences we do see in figure \ref{fig:statscomp_plaw} are consistent with how soft or hard the different statistics are. In the upper left, the unknown component is weak and signal-like. Thus, large regions give better results, soft statistics such as $C_n$ do well, and testing only the full space is best. In the lower right, the situation is reversed: backgrounds are large and clearly different from the signal, so harder statistics like $t$ do better.

There is no `optimal' statistic $T$. Because reasonable statistics perform similarly, choosing a particular statistic on purpose for a problem is difficult to motivate.
This work uses $t$ because many experiments already use likelihood ratios, since these have distinct advantages when \emph{known} backgrounds are present (as discussed below).
With a fully unknown background, other statistics are perfectly good choices.
The deficit hawk shown in figure \ref{fig:simple_example} outperforms Yellin's optimum interval method because it tests fewer regions -- the choice of statistic has little to do with it.

\section{Extensions}

\subsection{Known backgrounds}
\label{sec:known_bg}

\begin{figure}[t]
    \centering
    \includegraphics[width= \linewidth]{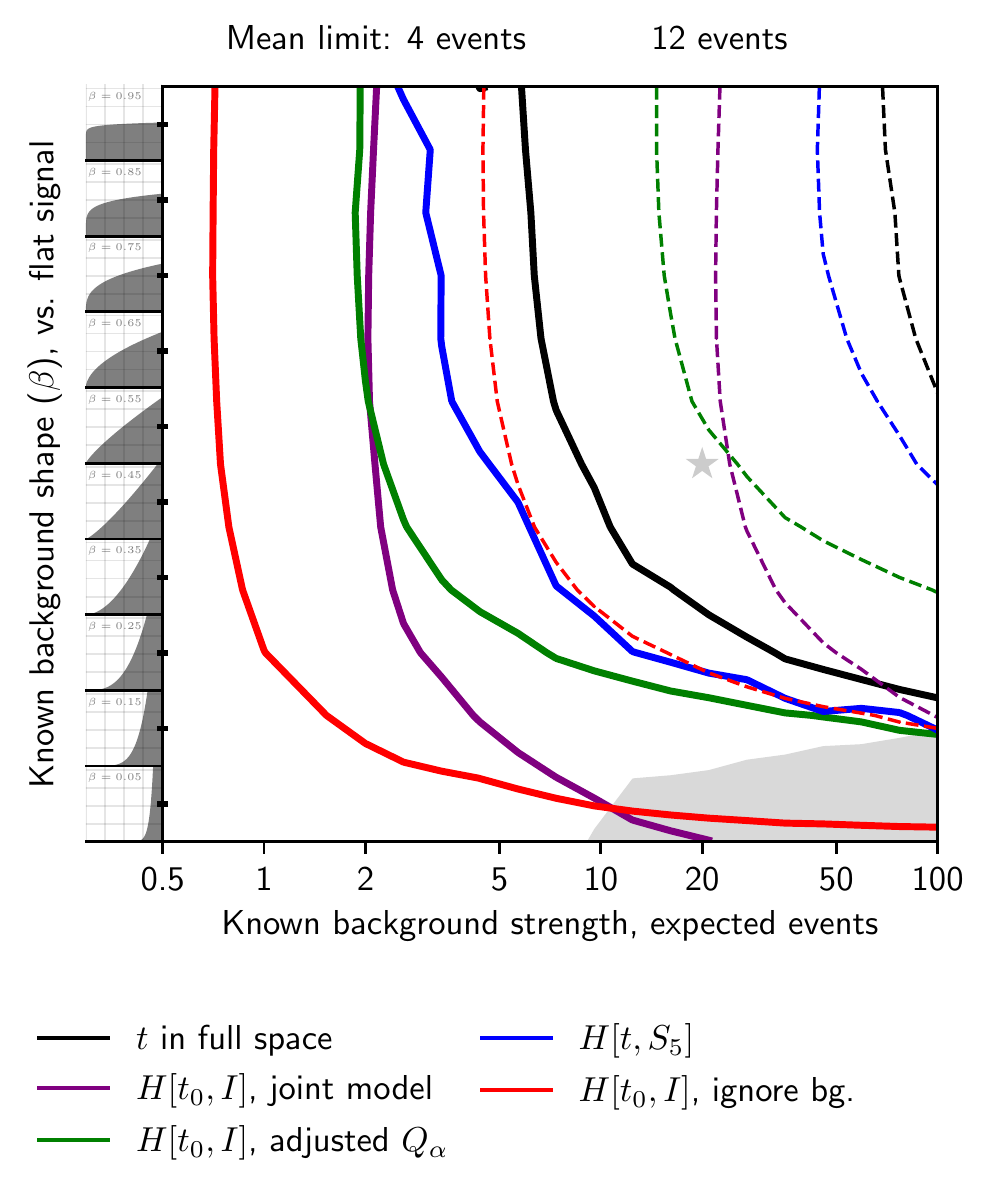}
    \caption{
    As figures \ref{fig:regioncomp_plaw} and \ref{fig:statscomp_plaw}, for experiments with only \emph{known} backgrounds of different strengths and shapes. Solid and dashed lines connect scenarios in which the mean upper limit on the signal strength is 4 and 12 events, respectively.
    Black lines are for for testing the full space with the signed likelihood ratio $t$, and blue lines are for using the full likelihood $t$ as the test statistic and $S_5$ as the region set. The other lines are for using $t_0$ and testing all intervals ($I$): green for the adjusted threshold method, purple for the joint model method, and red for ignoring the known background completely. 
    The gray shaded area marks scenarios in which ignoring the known background outperforms the joint model method. 
    The blue contours are rougher than others because they were estimated with lower Monte Carlo statistics, both for the toy limits and the $Q_{0.1}$ estimate.
    The gray star is used in figure \ref{fig:statscomp_mixed}.
    }
    \label{fig:statscomp_known}
\end{figure}

\begin{figure}[t]
    \centering
    \vspace{0.53cm}
    \includegraphics[width= \linewidth]{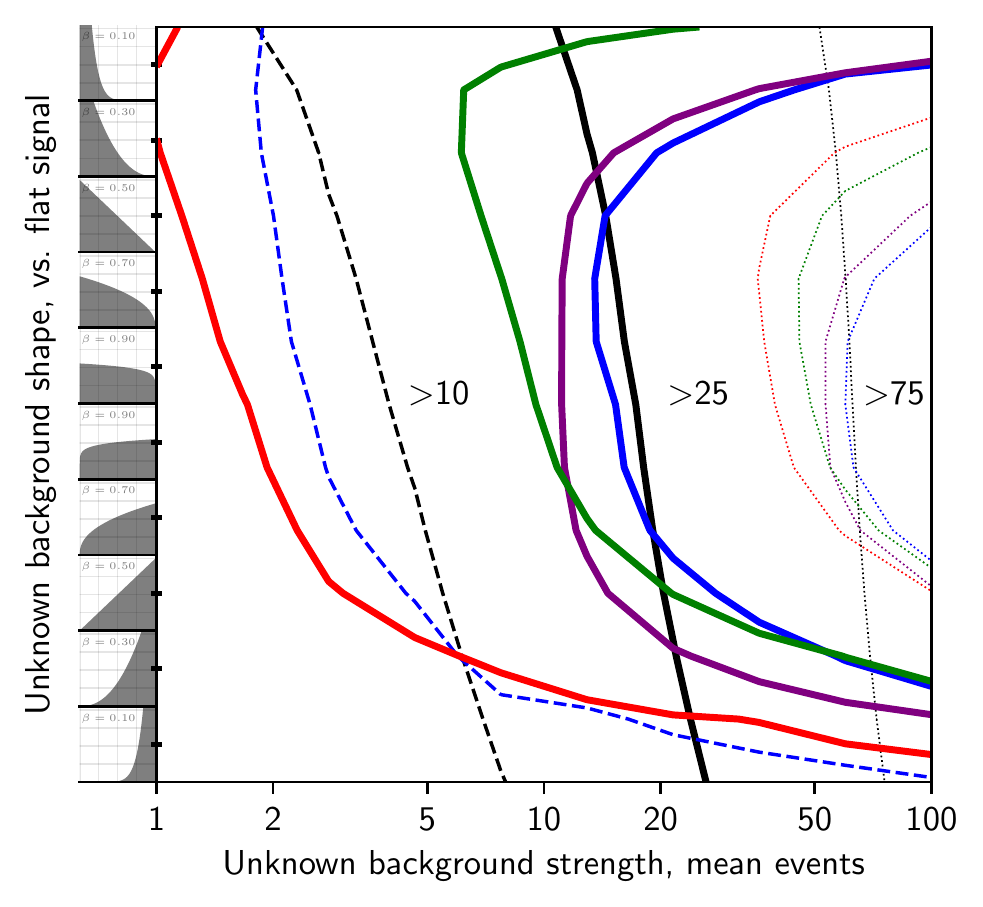}
    \caption{
    As figures \ref{fig:regioncomp_plaw},  \ref{fig:statscomp_plaw} and \ref{fig:statscomp_known}, for an experiment with both known and unknown backgrounds. The known background is held fixed at the gray star drawn in figure \ref{fig:statscomp_known}; the unknown background is as described by this figure's axes. Colors are as in figure \ref{fig:statscomp_known}; dashed, solid, and dotted lines connect scenarios in which the mean upper limit on the signal strength is 10, 25, and 50 events, respectively. The bottom half tests the same background shapes as earlier figures (see section \ref{sec:region_choice}); in the top half, the unknown background is mirrored ($x \rightarrow 1 - x$). Thus, the unknown component now has the same shape as the signal in the \emph{center} of the figure, rather than at the top.
    As before, do not confuse the red contour for 25 events with the contours for 10 events. Only the black and blue lines have a dashed contour; for the other methods, mean limits are above 10 everywhere.
    }
    \label{fig:statscomp_mixed}
\end{figure}

Analysts can often do more than just speculate about backgrounds: they may propose a partial background model. This means they assume backgrounds are \emph{at least} as large as a model predicts.

With a partial background model, the signed likelihood ratio $t$ becomes the obvious choice for the per-region statistic $T$, as $t$ discriminates the signal from known backgrounds at a per-event level. 
Note the likelihood can use any observable we have models for, whether or not it is used in the deficit hawk's region choices.
For example, even if a deficit hawk tests only several energy ranges, the likelihood can use reconstructed position or other observables for discrimination, instead of or in addition to energy.

Figure \ref{fig:statscomp_known} compares the performance of a deficit hawk using $t$ (in blue) with one that uses $t_0$ and ignores the background model (in red), if \emph{all background is known}, in the one-dimensional example explored in figures \ref{fig:regioncomp_plaw} and \ref{fig:statscomp_plaw}. Because $t$ is expensive to compute, we tested only the 25 $S_5$ regions with it; figure \ref{fig:regioncomp_plaw} indicates this should perform similar to testing all intervals for signals with modest predicted event count.

As expected, a deficit hawk using $t$ performs much better than a deficit hawk that treats all background as unknown -- by a factor $\sim \! 3$ near the center of the figure. If truly \emph{all} background is known, there is of course an even better method: a simple likelihood ratio test in the full space (black in figure \ref{fig:statscomp_known}). Deficit hawks prove their worth when unknown backgrounds appear.

Figure \ref{fig:statscomp_mixed} repeats our comparison for scenarios with known \emph{and} unknown background. Specifically, we chose one fixed \emph{known} background, marked by a star in figure \ref{fig:statscomp_known} (20 expected events, and a linearly rising distribution). Figure \ref{fig:statscomp_mixed} then studies many possible \emph{unknown} backgrounds that appear on top of this.
Indeed, we see that a deficit hawk using $t$ can significantly outperform a likelihood test in the full space. %
Testing the full space is preferable only if the unknown background is weak and similar to the signal -- and even then, the deficit hawk does not perform much worse.

Although deficit hawks that use $t$ perform well, their computational cost may be prohibitive, e.g.~for analyses that test many regions, or simply unnecessary, e.g.~because only a tiny part of the background is known. In such cases, we can fall back to using $t_0$ or other simple countlike statistics, but still have two options to leverage our partial background knowledge:
\begin{itemize}
    \item \emph{Joint model}: take the \emph{total} expected events in each region for $\mu$ in the hawk's statistic.
    \item \emph{Adjusted threshold}: take the expected \emph{signal} events for $\mu$ as usual. However, use the known background when computing the threshold $Q_\alpha(s)$.
\end{itemize}

The joint model method was proposed in \cite{optitv}, and makes the deficit hawk search for deficits with respect to the summed signal and background model. 
The adjusted threshold method instead computes the deficit hawk just as if all background were unknown. However, since we know there is some background, we expect to see higher (more excess-like) values than in a background-free experiment at any $s$. Thus, our threshold $Q_\alpha(s)$ is higher, and we can exclude an $s$ sooner, i.e.~at less extreme $H(s)$ values, than if we knew nothing of the background.

Figures \ref{fig:statscomp_known} and \ref{fig:statscomp_mixed} show the performance of these methods in green and purple. 
The joint model method can perform worse than treating all background as unknown (lower right of figure \ref{fig:statscomp_known}), because this method can favor regions where little (or in principle, no) signal is expected, or where the known background has a significant statistical underfluctuation. This becomes problematic if the known background is large and different from the signal.

The adjusted threshold method is guaranteed to do better than ignoring the known background: it computes the same statistic as in the fully-unknown case, then compares it against a higher threshold $Q_\alpha$. Its main weakness is that the statistic does not distinguish known and unknown backgrounds. Thus, if two regions have the same expected signal, the region with fewest observed events will win, even if the other region has a high expected background. This hurts the performance when the unknown and known background have opposite shapes, as in the upper half of figure \ref{fig:statscomp_mixed}.

In summary, if there are substantial known backgrounds, a deficit hawk with $t$ should give good results, even if it tests a limited number of regions. If only a tiny fraction of the background is known, using $t_0$ with the adjusted critical value method can be a good alternative. If, in the latter situation, the known and unknown background are likely to have very different shapes, the joint model method will work better instead.

\subsection{Underfluctuations and Sensitivity}
\label{sec:underflucts}

Background underfluctuations can force excessively strong exclusions from any analysis that subtracts known backgrounds. Significant \emph{unknown} backgrounds make such  underfluctuations unlikely, but the possibility should still be considered. A simple and common remedy is to cap exclusion limits at the 15.9th percentile (``$-1\sigma$'') of expected upper limits. For two-sided analyses, this is an approximation to power-constrained limits \cite{pcl}.

Analyses often show a `Brazil' band of expected upper limits along with their results \cite{atlas,cms,dd_conventions}. For experiments that know most, but not all, of their background, showing this band remains valuable -- if accompanied with a prominent caution that limits well above the band may merely indicate significant unknown backgrounds.

Analyses with mostly or completely unknown backgrounds can instead show the mean or median limit expected without unknown backgrounds, rather than a Brazil band. This is not an expected performance, but still a useful reference.

\subsection{Detection claims}
\label{sec:detections}

\begin{figure*}
    \centering
    \subfloat{{\includegraphics[height=6.46cm]{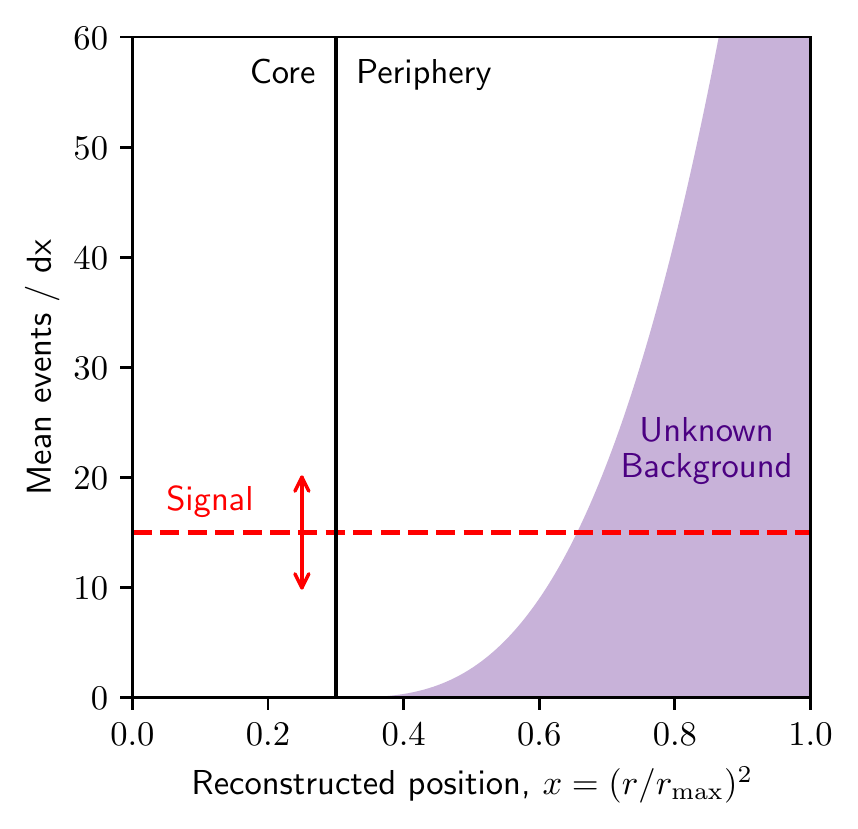}}}%
    \qquad
    \subfloat{{\includegraphics[height=7.75cm]{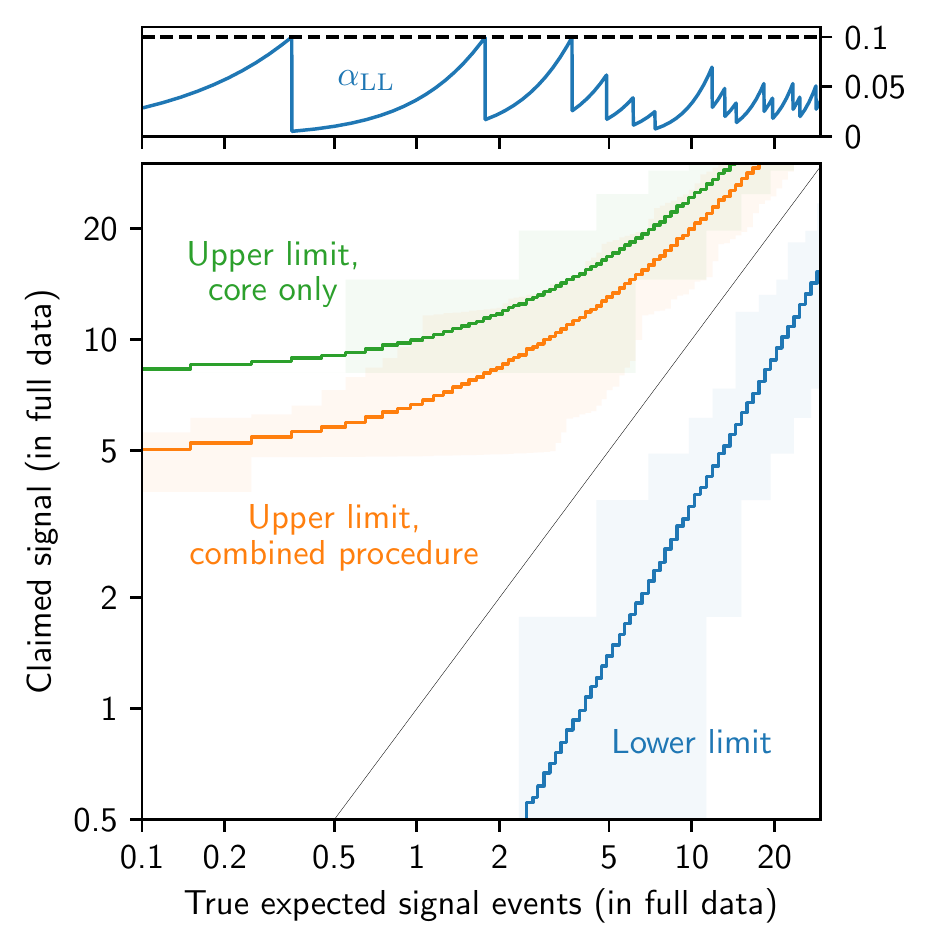} }}%
    \caption{An example of combining deficit hawks with a discovery-capable analysis. The experiment is as in figure \ref{fig:simple_example}, except that the inner $30\%$ core of the experiment is known to be background-free. The periphery has an unknown background, here taken to be the same as in figure \ref{fig:simple_example} (squeezed to the smaller range), as illustrated on the left plot.
    The bottom right plot shows the mean upper (green) and lower (blue) limit from a standard Feldman-Cousins \cite{feldman_cousins} counting analysis in the core. The orange line shows the mean upper limit from the procedure in section \ref{sec:detections}, which gives the same lower limits as the core Feldman-Cousins analysis. The top right panel shows $\alpha_\text{LL}$, the probability of that the lower limit excludes the true signal. Thin bands shade the region between $\pm 1 \sigma$ quantiles of results, if this exist. Because the core analysis can only produce a discrete set of results, these bands are highly discontinuous.
    }%
    \label{fig:discovery_study}%
\end{figure*}

If arbitrary unknown backgrounds can appear, discovery claims are impossible: the unknown background might look exactly like a new physics signal. While most experiments have unknown backgrounds, many can select a \emph{core} region of data in which they are absent or insignificant.
That core region can be employed to set two-sided confidence intervals -- but this leaves the remaining \emph{peripheral} data unused.

Instead, we can do a two-step Neyman construction similar to \cite{knut_fc} to naturally integrate deficit hawks into discovery analyses. Here, as before, $\alpha = 1 - \mathrm{CL}$, with $\mathrm{CL}$ the desired confidence level.
\begin{enumerate}
    \item Use any desired two-sided procedure in the core region, e.g.~Feldman-Cousins \cite{feldman_cousins} or profile likelihoods with nuisance parameters. Keep only the \emph{lower limit}, i.e.~the detection claim, if there is one. 
    \item For each $s$, determine (perhaps with Monte Carlo) the probability $\alpha_\text{LL}(s) \leq \alpha$ of mistaken exclusions of the true signal by the lower limit. 
    \item Determine an upper limit with a deficit hawk, using an $s$-dependent confidence level $\alpha_\text{UL}(s) = \alpha - \alpha_\text{LL}(s)$. That is, use $Q_{{\alpha_\text{UL}}(s)}(s)$ rather than $Q_\alpha(s)$ as the threshold value. The deficit hawk should use a likelihood ratio $t$ to test the core region and one or more extensions of it. If the core analysis has nuisance parameters, fix them at conservative values for limit setting (i.e.~assume a low background model).
\end{enumerate}

This procedure has (over)coverage by construction. For any $s$, the chance of falsely excluding it is $\alpha_\text{LL}(s)$ for the lower limit (step 1) and $\leq \alpha - \alpha_\text{LL}$ from the upper limit (step 3), even if unknown backgrounds appear in the periphery. Thus the total false exclusion probability is at most $\alpha$ for each $s$, as required. 
While any method could be used in step 3, likelihood-wielding deficit hawks make conceptual sense -- they do the same analysis as in the core, with different cuts and frozen nuisance parameters.

As an example, consider figure \ref{fig:discovery_study} (left), which sketches a situation similar to figure \ref{fig:simple_example} in which the core $30\%$ of the detector ($x < 0.3$) is known to be background-free. Known backgrounds can be handled straightforwardly (see section \ref{sec:known_bg}), but are omitted in this example for simplicity. The peripheral $70\%$ ($x > 0.3$) of the detector has an unknown background -- in this case, the background shown in figure \ref{fig:simple_example} squeezed into $x \in [0.3, 1]$. As step 1, we set simple $90\%$ Feldman-Cousins \cite{feldman_cousins} intervals using the count of events in the core region. Step 2 computes $\alpha_\text{LL}$, the probability of false discoveries, shown in the top right of figure \ref{fig:discovery_study}. It approaches $\alpha/2 = 0.05$ for large signals, and has sawtooth-like behavior at small scales because event counts are discrete. Finally, in step 3, we use a deficit hawk that test the core cylinder $x \in [0, 0.3]$ and all \emph{larger} central cylinders. The deficit hawk uses a more stringent critical threshold $Q_{0.1 - \alpha_\text{LL}(s)}(s)$ than the usual $Q_{0.1}(s)$. Figure \ref{fig:discovery_study} (right) shows the mean upper and lower limits from this procedure. The upper limit can be nearly twice as strong as the core Feldman-Cousins analysis would produce, while we produce the same lower limits.

This procedure has some peculiarities which, though unimportant in the example above, may require more detailed study in other situations.

First, although the procedure has (over)coverage, it can produce very narrow and even empty intervals in an exceptional circumstance: when we see many events in the core and few in the periphery.
In the example of figure \ref{fig:discovery_study}, we get an empty interval if we observe $\geq 13$ events with $x<0.3$, and no other events. Clearly this is vanishingly unlikely, even if the periphery had no unknown background. In situations where empty intervals are a real possibility -- perhaps due to large \emph{known} backgrounds that could underfluctuate -- analysts could decide to publish the lower limit / discovery claim only if some threshold discovery significance is met in the core \emph{and} all (core+periphery) extensions tested. Peripheral regions with high unknown backgrounds will easily exceed the threshold.

Second, the discarded upper limit from the core analysis could be stronger than that of the deficit hawk. Though unfortunate, this is neither problematic nor common: the deficit hawk always tests the core region too, and should at least be somewhat effective in mitigating the unknown background. In the example of figure \ref{fig:discovery_study}, the deficit hawk produced the stronger upper limit in $>99\%$ of trials without true signal. If a strong signal is present, the upper limits from the two analyses perform more similarly; at $\mu_\text{sig} = 30$ in this example, the upper limit from the core is best in $\sim 30\%$ of trials.

Finally, the deficit hawk (step 3) must always allow hypotheses for which the core analysis (step 1) saturates $\alpha$. In the example of figure \ref{fig:discovery_study}, if the truth is near near $1.1$ signal events in the core, or 3.7 in the full data, the lower limit alone would exclude it up to $10\%$ of the time \footnote{The lowest possible \emph{upper} limit from the core analysis is 2.44 signal events in the core \cite[table IV]{feldman_cousins}, or 8.2 events in the full data. The lower limit does not always saturate the full $\alpha$ below this due to the overcoverage inherent in using a discrete statistic.}. Thus, step 3 can never exclude signals below 3.7 events in the full data.

Sometimes, as in this example, this is no problem: the signal rates for which $\alpha_\text{LL} = \alpha$ would be too low to exclude by any method. However, for small cores or low unknown backgrounds, this is unacceptable: we should be able to exclude hypotheses that predict many events in the full data if the unknown background is small enough. In such cases, we should ensure the core analysis does not spend our full $\alpha$ budget. For example, we could set $92\%$ instead of $90\%$ Feldman-Cousins intervals in the core, or less arbitrarily, $95\%$ one-sided lower limits with some threshold discovery significance.

\subsection{Multiple dimensions}
\label{sec:multidim}
Above, we illustrated the deficit hawk method with experiments that measure one quantity, but experiments that measure multiple observables per event can also use deficit hawks. 

As mentioned in section \ref{sec:known_bg}, to discriminate known backgrounds, deficit hawks can benefit from using likelihoods with more or different observable dimensions than the one(s) distinguishing different tested regions. However, the regions tested in deficit hawks can themselves also span multiple dimensions. Earlier studies have already explored this for the optimum interval and maximum gap methods \cite{optitv2, max_patch, pico}.

\begin{figure}
    \centering
    \includegraphics[width=0.7  \columnwidth]{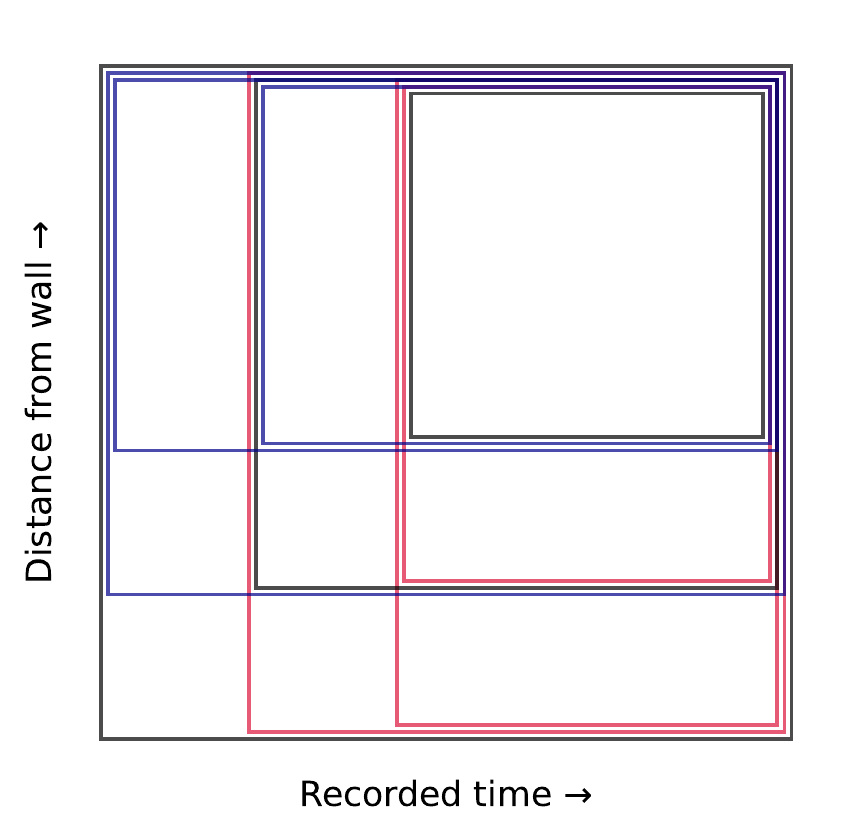}
    \caption{Nine example regions that an experiment measuring times and positions of events could test. Tiny offsets of the rectangles are for visualization.}
    \label{fig:multidim}
\end{figure}

For example, suppose an experiment records times and positions of events, and analysts are concerned about unknown backgrounds at early times (e.g.~decaying contaminants) as well as near detector edges. Analysts could then choose to test the rectangles illustrated in figure \ref{fig:multidim}: combinations of different time thresholds and fiducial volume cuts -- perhaps a conservative, nominal, and aggressive option in each dimension. The smallest box in the top right could perhaps be a `core' region for a discovery analysis.

If one of the observables is energy, analysts may be concerned about backgrounds at both high and low energies. In that case, they could test e.g.~$S_4$ or even all intervals in energy, and multiple time or position thresholds for each energy cut.
If analysts expect highly correlated backgrounds (e.g.~near the walls \emph{and} at early times), they could test some set of non-rectangular regions. And so forth -- whatever is sensible given speculations about the unknown background.
With regions that differ across multiple dimensions, the number of tested regions can multiply quickly. To avoid the corresponding decrease in physics reach, it is important to exercise restraint in choosing how many regions to test.

An `observable' used in a deficit hawk can also be the output of another statistical model. For example, analysts could use the signal/background likelihood ratio computed by an incomplete or possibly inaccurate background model, or even the score of a machine learning algorithm. This would fold multidimensional discrimination problems into one dimension, to which deficit hawks may be easier to apply.

For example, analysts might train a signal-background discriminator on simulated data (e.g.~\cite{lux_ml}); or sacrifice some science data for training such a classifier even though the science data may have some signal \cite{cwola}. If the signal simulation is accurate, the signal distribution in the model output dimension is known. However, we are rarely as confident in how the model scores all possible backgrounds, including rare instrumental backgrounds. This problem is ideally suited to deficit hawks.

\section{Discussion}
\label{sec:discussion}

\subsection{Objections and rebuttals}
This work showed how using deficit hawks -- i.e.~testing multiple options for cuts simultaneously -- allows experiments to leverage uncertain domain knowledge, and thereby improve their physics reach. It is worth re-emphasizing that this does \emph{not} introduce additional `subjectivity' or `assumptions' in the analysis. 
Deciding to set a single hard fiducial volume cut or energy threshold, is just as `objective' as deciding to try several predefined options and correctly account for this statistically. We must of course decide which options to try \emph{before} seeing the data. That makes making a good single choice difficult -- which is why deficit hawks are beneficial. 

Experiments may or may not have good physical reasons to test particular regions -- but in an upper limit study, those are not `assumptions' that would make the analysis incorrect when violated. Indeed, experiments often choose regions for reasons that have nothing to do with physics, e.g.~to make a fiducial mass a nice round number. If an experiment makes a poor region choice, it may see much unknown background, but its upper limits are still correct -- just weaker than they might have been. Incorrect results can of course happen if experiments decide to assume known background models (but in fact, the background is different), or assume discoveries are possible because unknown backgrounds are absent in a core region (but in fact, they are still present).

Another objection might be that choosing sensible regions to test is an impossible task if the background is truly unknown.
Absent situations such as figure \ref{fig:simple_example}, where there are strong a-priori reasons for fixing the minimum fiducial radius at zero, should we always fall back to testing broad sets like `all intervals'?

Of course, testing all intervals is a reasonable choice in some situations. But experiments already employ many ingenious ways to learn about unknown backgrounds -- often enough to make them comfortable with simple hard cuts. Any experiment willing to do \emph{that} should not balk at choosing a custom region set in a deficit hawk.

In fact, using deficit hawks stimulates analysts to study the unknown background. Even if background studies do not yield a formal model or a clear single optimal cut, deficit hawks allow analysts to benefit from their knowledge if it is correct, without risking erroneous results if it is not.

For example, analysts could sacrifice $\sim \! 10\%$ (or some other fraction) of the data, and see if any regions have few observed events despite a high signal expectation, or vice versa \footnote{Sacrificing some data to inform the inference strategy is known as `sample splitting' in statistics \cite{post_selection}, and requires assuming that observed events are mutually independent. This is true for Poisson processes like particle scatters and radioactive decays studied in particle physics. Events in real detectors may deviate slightly from a Poisson process, e.g. because it takes some time to record an event and return to a rest state. For rare-event searches, the Poisson process is generally still an excellent approximation.}. Perhaps the analysts see many events at very low energy, and therefore decide to test regions with different low-energy thresholds on the unseen remaining $\sim \!90\%$, while setting a fixed high-energy threshold. If the true background indeed rises monotonously towards low energies (in coordinates where the signal is flat), figure \ref{fig:regioncomp_plaw} shows this could yield about twice as strong limits than testing all intervals -- well-worth the sacrificed $10\%$. With multiple data quality dimensions, sacrificing a bit of data to see which regions are worth testing may be even more important.

\subsection{Signal uncertainties}
Throughout this work, we assumed the signal model is known. If it is estimated conservatively instead, that is fine too -- any `extra signal' will behave like an unknown background and weaken the upper limits. 

However, an \emph{overestimated} signal detection efficiency is problematic for deficit hawks, including the optimum interval method and its variants. Overestimated signals cause fake deficits even if true signals appear, and deficit hawks actively seek those deficits out. The more regions a hawk tests, and the smaller those regions are allowed to be, the more likely an unexpected local signal efficiency loss is to cause incorrect results (false exclusions of true signals). Compared to testing a single region, deficit hawks offer increased robustness to underestimated backgrounds, at the price of increased fragility to overestimated signals.

In astroparticle physics, this is often a sensible tradeoff. Experiments are usually designed to allow calibrations that accurately establish the signal detection efficiency, but quantifying all possible (instrumental) backgrounds is considerably more challenging. Though new physics models usually have unknown parameters (such as dark matter masses or couplings), new physics searches generally constrain individual options for these separately. 

We can also mitigate concerns about imperfect signal efficiencies in several ways. First, avoid testing regions in which the signal model is particularly uncertain, e.g.~near trigger thresholds. Second, avoid testing large numbers of arbitrarily small regions -- for example, require that some minimal fraction of the total expected signal must survive, or limit the tested regions to several hand-picked options. Finally, examine and ideally publish the region ultimately favored by the deficit hawk, to verify that the signal detection efficiency in it can reasonably be assumed as known.

\section{Conclusions and outlook}
\label{sec:conclusions}

\subsection{Summary}
Deficit hawks allow experiments to constrain new physics signals, even when unknown backgrounds are present in their data. The idea is simple: test multiple regions or cuts of the data (chosen in advance), take the region that gives the most deficit-like result, and set a valid upper limit with a threshold $Q_\alpha(s)$ estimated by Monte Carlo simulations.

As we saw in section \ref{sec:region_choice}, choosing which regions to test is the key way to optimize performance. Testing too few regions limits what background shapes the method is robust to, and testing too many regions unnecessarily weakens the limits by lowering the threshold $Q_\alpha$. Good region choices find a golden middle, testing only cuts that have a reasonable chance of fighting likely unknown backgrounds. 

We also saw that signed likelihood ratios are an excellent choice for the test statistic in deficit hawks. Other statistics perform similarly when all backgrounds are unknown (see section \ref{sec:stat_choice}), but clearly worse when discriminating known backgrounds is also relevant (see section \ref{sec:known_bg}). The adjusted $Q_\alpha$ and joint model methods can compensate for some, but not all of this difference.

\subsection{Further research}
\label{sec:future_research}
Interested researchers will find many opportunities for continued study. 
In particular, practical tests of the extensions to detection claims and multiple dimensions in sections \ref{sec:detections} and \ref{sec:multidim} may yield useful recommendations and refinements.

We did not consider inference on multiple parameters of interest, or using nuisance parameters beyond core regions used for discovery searches. Nuisance parameters can make likelihood ratios non-countlike -- for example, if background rates or shapes are floating, adding events in a background-rich region can increase the background expectation in the signal region, and make observations more excess-like. But perhaps some nuisance parameters are allowed under some conditions?

We considered the region set $R$ as fixed, but more complicated region choice algorithms may be possible too, as long as the deficit hawk remains countlike. For example, we could test only intervals with $\leq b$ observed events (e.g. $b=0$ tests only gaps), but we cannot test only intervals with $> b$ events. The latter is not countlike: unknown backgrounds might add events to regions and make them available for testing. This can lower the overall statistic, strengthen upper limits, and destroy coverage.
Similarly, it seems difficult to propose or filter candidate regions with heuristics such as alternate (faster) statistics.

All $Q_\alpha$ estimation in this work was done by Monte Carlo, which can be costly or inaccurate for high $\mu(s)$ (though see appendix \ref{sec:compute}). Although technological progress will continually expand the practical range for Monte Carlo, exact results or even analytical approximations \cite{asymptotic, nature_review} would still be useful. Some strategies for reducing the number of necessary Monte Carlo trials \cite{gross_vitells, tohm} could perhaps be generalized to deficit hawks. More generally, there should be a close analogy between deficit hawks and the literature for look-elsewhere effects and multiple hypothesis tests.

\section*{Acknowledgements}
I would like to thank Sara Algeri and Steven Yellin for useful discussions and feedback. I gratefully acknowledge support from a Kavli Fellowship granted by the Kavli Institute for Particle Astrophysics and Cosmology.

\bibliographystyle{apsrev4-1}
\bibliography{references}

\clearpage

\appendix

\section{Properties of the signed likelihood ratio}
\label{sec:proofs}

This appendix proves some claims about the signed likelihood ratio $t$ (eq.~\ref{eq:signed_plr}) in the text. In particular, we will show that $t$ is \emph{countlike}, as defined in section \ref{sec:stat_choice}, and can thus be used in deficit hawks.

We assume all notation introduced in section \ref{sec:notation}. For simplicity, `increasing' and `decreasing' are considered non-strictly, i.e.~$f(x)$ is  increasing if $x > y$ implies $f(x) \geq f(y)$ (not $f(x) > f(y)$). %

\begin{proposition}
\label{thm:changedl}
Consider a change to the log likelihood $\ell = \ln L$:
\begin{equation}
\label{eq:changedl}
\ell \rightarrow \tilde{\ell}(s) = \ell(s) + \Delta(s) 
\end{equation}
so that $\tilde{\ell}$ still has a maximum, now at $\tildehat{s}$. If $\Delta(s)$ increases with $s$, then the signed likelihood ratio $t$ increases under the change, i.e.~$\tilde{t}(s) \geq t(s) \; \forall s$. Conversely, if $\Delta(s)$ decreases with $s$, $t$ decreases 
under the change.
\end{proposition}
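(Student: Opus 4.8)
The plan is to connect the old and new likelihoods by a one-parameter family and show that $t$ moves monotonically along it. I would define $\ell_\lambda(s) = \ell(s) + \lambda\,\Delta(s)$ for $\lambda \in [0,1]$, so that $\ell_0 = \ell$ and $\ell_1 = \tilde\ell$. Write $M(\lambda) = \max_s \ell_\lambda(s)$ for the maximized log-likelihood, attained at $\hat{s}_\lambda$ (with $\hat{s}_0 = \hat{s}$ and $\hat{s}_1 = \tildehat{s}$), and $m(\lambda) = \ell_\lambda(s_0) = \ell(s_0) + \lambda\,\Delta(s_0)$ for its value at a fixed hypothesis $s_0$. Then the per-$\lambda$ unsigned ratio is $u_\lambda = 2\big(M(\lambda) - m(\lambda)\big) \ge 0$, and $t_\lambda(s_0) = \sigma_\lambda \sqrt{u_\lambda}$ with $\sigma_\lambda = \mathrm{sign}\big(\mu(\hat{s}_\lambda) - \mu(s_0)\big)$. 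Proving $t_1(s_0) \ge t_0(s_0)$ for every $s_0$ establishes $\tilde t(s) \ge t(s)$.

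The core is a derivative computation. Wherever $u_\lambda > 0$ and $\sigma_\lambda$ is locally constant, $m'(\lambda) = \Delta(s_0)$ trivially, while the envelope (Danskin) theorem gives $M'(\lambda) = \Delta(\hat{s}_\lambda)$, since the implicit dependence through the maximizer vanishes at the maximum. Hence
\begin{equation}
t_\lambda'(s_0) = \sigma_\lambda\,\frac{\Delta(\hat{s}_\lambda) - \Delta(s_0)}{\sqrt{u_\lambda}}.
\end{equation}
Now I use that $\mu$ is (strictly) increasing in $s$ for a signal strength, so $\sigma_\lambda = \mathrm{sign}(\hat{s}_\lambda - s_0)$, and that $\Delta$ is increasing, so $\Delta(\hat{s}_\lambda) - \Delta(s_0)$ carries the same sign as $\hat{s}_\lambda - s_0$ (or vanishes). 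The two factors therefore match in sign, their product is $\ge 0$, and $t_\lambda'(s_0) \ge 0$. The elegance is that this single sign-matching argument covers the excess and deficit branches at once --- exactly the step where a cruder two-point comparison of $u$ fails, because the deficit branch needs the magnitude $\sqrt{u}$ to \emph{decrease}, which only the envelope-tracked $\Delta(\hat{s}_\lambda)$ delivers.

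It remains to promote the local sign of the derivative to a global inequality, and this is where I expect the main obstacle: the points where $u_\lambda = 0$, i.e.\ where $\hat{s}_\lambda$ crosses $s_0$ and $\sigma_\lambda$ flips, producing a $0/0$ in the displayed formula. I would handle these by continuity rather than differentiability: $M(\lambda)$ is a maximum of functions affine in $\lambda$, hence convex and continuous; $u_\lambda$ is therefore continuous and nonnegative; and under strict monotonicity of $\mu$ the sign $\sigma_\lambda$ can only flip where $u_\lambda = 0$, so $t_\lambda = \sigma_\lambda \sqrt{u_\lambda}$ is continuous on all of $[0,1]$. A continuous function with nonnegative derivative off a negligible set is nondecreasing, giving $t_1(s_0) \ge t_0(s_0)$. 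Secondary care is needed for existence and (one-sided) differentiability of $\hat{s}_\lambda$ along the path: if the maximizer is non-unique, $M(\lambda)$ still admits one-sided derivatives equal to $\Delta$ evaluated at the appropriate extreme maximizers, and the same sign argument applies to each. The converse statement follows immediately by applying the result to $-\Delta$ (equivalently, flipping every inequality above), which reverses the sign of $t_\lambda'$ and hence makes $t$ decrease.
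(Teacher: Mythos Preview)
Your interpolation argument is correct and takes a genuinely different route from the paper. The paper compares $\ell$ and $\tilde\ell$ directly at two points: it first shows $\tildehat{s}\geq\hat{s}$ from the elementary inequality chain $\ell(\tildehat{s})-\ell(\hat{s})\leq 0\leq \tilde\ell(\tildehat{s})-\tilde\ell(\hat{s})\leq \Delta(\tildehat{s})-\Delta(\hat{s})$, then splits into three cases according to whether $s$ lies below, between, or above $\hat{s}$ and $\tildehat{s}$, and in each case checks the sign of $\tilde u-u$ by hand. Your path instead tracks $t_\lambda$ continuously via $\ell_\lambda=\ell+\lambda\Delta$ and the envelope theorem, collapsing the paper's three cases into the single sign-matching observation $\sigma_\lambda\cdot(\Delta(\hat s_\lambda)-\Delta(s_0))\geq 0$.

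What each buys: the paper's proof is fully elementary---no differentiability of $\Delta$, no Danskin, no worries about non-unique maximizers---at the cost of a somewhat opaque case split. Your argument is conceptually cleaner and makes transparent \emph{why} the signed square root is the right object, but it leans on regularity that the statement does not supply. One point worth tightening: your claim that $t_\lambda$ is continuous can fail if the maximizer jumps across $s_0$ at some $\lambda^*$ with $u_{\lambda^*}>0$. However, the same supermodularity that drives your argument (adding $\lambda\Delta$ with $\Delta$ increasing) forces $\hat s_\lambda$ to be nondecreasing in $\lambda$, so any such jump is from $\sigma=-1$ to $\sigma=+1$ and hence upward in $t_\lambda$; monotonicity survives even though continuity does not. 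Both proofs also silently use that $\mu$ is increasing in $s$, so that $\mathrm{sign}(\mu(\hat s)-\mu(s))=\mathrm{sign}(\hat s-s)$; you state this explicitly, the paper does not.
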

\begin{proof}

Using eq.~\ref{eq:changedl} for $\Delta$,
\begin{equation}
    \label{eq:leq}
    \Delta(\tildehat{s}) - \Delta(\hat{s}) = \big[ \tilde{\ell}(\tildehat{s}) -  \tilde{\ell}(\hat{s}) \big] - \big[ \ell(\tildehat{s}) - \ell(\hat{s}) \big].
\end{equation}
Since $\tildehat{s}$ maximizes $\tilde{\ell}$ and $\hat{s}$ maximizes $\ell$, the first term in brackets is positive, and the second negative. Thus,
\begin{equation}
\label{eq:lineq}
\ell(\tildehat{s}) - \ell(\hat{s}) \leq 0 \leq \tilde{\ell}(\tildehat{s}) -  \tilde{\ell}(\hat{s}) \leq \Delta(\tildehat{s}) - \Delta(\hat{s})
\end{equation}

Now consider the first part of the proposition, and assume $\Delta(s)$ increases with $s$. From eq.~\ref{eq:lineq}, we see $\tildehat{s} \geq \hat{s}$, i.e.~\emph{the best-fit increases}. To show $\tilde{t}(s) \geq t(s)$ for all $s$, we must thus consider three cases:
\begin{enumerate}[label=\Alph*]
    \item $s < \hat{s} < \tildehat{s}$, %
    \item $\hat{s} \leq s \leq \tildehat{s}$, %
    \item $\hat{s} < \tildehat{s} < s$, %
\end{enumerate}

Case B is straightforward, since $s$ lies on different sides of the sign flip in $t$ and $\tilde{t}$. That is, $t(s) \leq 0$ since $s \geq \hat{s} $ (and $u(s) \geq 0$), and $\tilde{t}(s) \geq 0$ since $s \leq \tildehat{s}$.

For case A and C, the sign function gives the same result for both $t(s)$ and $\tilde{t}(s)$, so we have
\begin{align}
    \frac{\tilde{t}(s) - t(s)}{\pm2} 
    &= \tilde{\ell}(s) - \tilde{\ell}(\tildehat{s}) - \ell(s) + \ell(\hat{s})  \nonumber \\ 
    &= \Delta(s) - \tilde{\ell}(\tildehat{s}) + \ell(\hat{s})  \label{eq:pre_expansion} \\
    &= \big[ \Delta(s) - \Delta(\hat{s}) \big] - \big[ \tilde{\ell}(\tildehat{s}) - \tilde{\ell}(\hat{s}) \big] \label{eq:expansion_a} \\
    &= \big[ \Delta(s) - \Delta(\tildehat{s}) \big] -  \big[ \ell(\tildehat{s}) - \ell(\hat{s}) \big] \label{eq:expansion_c}.
\end{align}
where the $-$ sign applies in case A and the $+$ sign in case C.
Note eq.~\ref{eq:expansion_a} and eq.~\ref{eq:expansion_c} are equivalent expansions of eq.~\ref{eq:pre_expansion}; alternatively each follows from the other by eq.~\ref{eq:leq}.

For case A, consider eq.~\ref{eq:expansion_a}. The first term in brackets is \emph{negative} as $\Delta$ is an increasing function (and $s \leq \hat{s}$ in case A), while the second term in brackets is \emph{positive} by eq.~\ref{eq:lineq}. Thus the whole expression is \emph{negative}, implying the desired $\tilde{t}(s) \geq t(s)$.

For case C, consider eq.~\ref{eq:expansion_c}. The first term in brackets is \emph{positive} as $\Delta$ is an increasing function (and $\tildehat{s} \leq s$ in case C), and the second term in brackets is \emph{negative} by \ref{eq:lineq}. Thus the whole expression is \emph{positive}, and again $\tilde{t}(s) \geq t(s)$ as desired.

Finally, consider the second part of the proposition, with $\Delta(s)$ decreasing in $s$. From eq.~\ref{eq:lineq}, we now see $\tildehat{s} \leq \hat{s}$, i.e. the best-fit \emph{decreases}. 
To show $\tilde{t}(s) \leq t(s)$ for all $s$, the proof again splits in three cases,
\begin{enumerate}[label=\Alph*]
    \item $s < \tildehat{s} < \hat{s}$,
    \item $\tildehat{s} \leq s \leq \hat{s}$,
    \item $\tildehat{s} < \hat{s} < s$.
\end{enumerate}
Case B is simple as before. Case A follows because eq.~\ref{eq:expansion_c} is now clearly positive under A's condition (rather than that of C), and case C because eq.~\ref{eq:expansion_a} is now clearly negative under C's condition (rather than that of A), by similar reasoning as above.
\end{proof}

Next, we use this proposition to show $t$ is countlike, provided its parameter $s$ increases the expected events everywhere. For full generality, we will assume $L$ is a \emph{binned likelihood}, since an unbinned likelihood (eq.~\ref{eq:unbinned_l}) is equivalent to a binned likelihood with infinitely many infinitesimal bins. That is,
\begin{align}
\label{eq:binnedl}
L &= \mathrm{Pr} \prod_{\text{bins} \, j} \text{Poisson}(n_{j}|\mu_{j}) \nonumber \\
\ln L &= \ln \mathrm{Pr} - \mu + \sum \limits_{\text{bins} \, j} n_j \ln \mu_j ,
\end{align}
where $n_{j}$ and $\mu_{j}$ are the observed and expected number of events in the bin numbered $j$; $\mu$ is the total expected number of events; and $\text{Poisson}(n|\mu) = \mu^n e^{-\mu} / n!$, the Poisson probability mass function.
We omitted model-independent constants from $\ln L$, since these cancel in likelihood ratios. All of $L$, $\mathrm{Pr}$, $\mu$ and $\mu_j$ are generally functions of $s$. If the likelihood is binned, regions used in the deficit hawk should always end at bin boundaries.

\begin{proposition}
The signed likelihood ratio $t(s)$ is countlike if $\mu_r'(s) \geq 0$ for any region $r \subset X$.
\end{proposition}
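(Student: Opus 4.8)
The plan is to verify the two conditions in the definition of \emph{countlike} (section \ref{sec:stat_choice}) directly, leaning on Proposition \ref{thm:changedl} to handle both of them by exhibiting the right increment $\Delta(s)$ to the log likelihood. The key observation is that both operations -- adding an event to the data, and expanding the region $r$ by empty space -- change the binned log likelihood (eq.~\ref{eq:binnedl}) by an \emph{additive} term, so each reduces to checking whether that additive term is monotone in $s$.

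For the first condition, I would add a single observed event to the data. Since we use a binned likelihood, that event falls in some bin $j$, increasing its count $n_j$ by one. By eq.~\ref{eq:binnedl}, the log likelihood changes by $\Delta(s) = \ln \mu_j(s)$. Now $\mu_j(s)$ is the expected count in bin $j$, and the hypothesis $\mu_r'(s)\geq 0$ for every region $r$ says precisely that the expected events in any subset of $X$ -- in particular any single bin -- increase with $s$. Hence $\Delta(s)=\ln\mu_j(s)$ is increasing in $s$, and Proposition \ref{thm:changedl} gives $\tilde t(s)\geq t(s)$ for all $s$: the statistic increases when an event is added. For the second condition, I would expand $r$ by some empty space, i.e.~append one or more additional bins that contain no observed events. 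Relative to the likelihood computed on $r$, the enlarged-region likelihood $t_r$ picks up the contribution of these new bins. Because their counts are zero, the $n_j\ln\mu_j$ terms vanish (with $0\ln 0$ read as $0$, as in eq.~\ref{eq:t0}), and the only surviving change is $\Delta(s) = -\mu_{\text{new}}(s)$, minus the total expected events in the added space. Again by the hypothesis $\mu_r'(s)\geq 0$, this $\mu_{\text{new}}(s)$ is increasing, so $\Delta(s)$ is \emph{decreasing}, and the converse half of Proposition \ref{thm:changedl} yields $\tilde t_r(s)\leq t_r(s)$: the statistic decreases when the region is enlarged by empty space. The two conclusions are exactly conditions 1 and 2 of countlike.

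I expect the main subtlety to be bookkeeping about what ``computing $T$ using only the region $r$'' means at the likelihood level, rather than any hard inequality. Concretely, one must be careful that $t_r$ is the signed likelihood ratio of a likelihood built only from bins inside $r$, so that enlarging $r$ genuinely corresponds to appending bins (and thus to an additive $\Delta$), and that the optional prior term $\mathrm{Pr}(s)$ either is held fixed or, if it too varies, contributes a monotone piece of the correct sign. As long as the prior is independent of the region (it represents external constraints, per section \ref{sec:notation}), it cancels or contributes identically in both likelihoods and does not affect $\Delta$. A second point to state cleanly is that Proposition \ref{thm:changedl} requires the perturbed likelihood to still attain a maximum at some $\tildehat s$; since adding a finite increment $\ln\mu_j(s)$ or $-\mu_{\text{new}}(s)$ to a log likelihood that already has a maximum preserves this under the same mild regularity assumed throughout, I would simply note that the hypotheses of the proposition are met and invoke it. With those clarifications, the result follows immediately and requires no further calculation.
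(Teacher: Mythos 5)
Your proposal is correct and follows essentially the same route as the paper's own proof: both reduce each countlike condition to Proposition \ref{thm:changedl}, with $\Delta(s) = \ln \mu_j(s)$ (increasing) for an added event and $\Delta(s) = -\mu_j(s)$ summed over the appended empty bins (decreasing) for an enlarged region. The only detail the paper states that you leave implicit is the trivial case of an event added \emph{outside} $r$, where $t_r$ is simply unchanged and the non-strict condition holds automatically.
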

\begin{proof}
This is a simple corollary of proposition \ref{thm:changedl} above. 
First, we must prove $t$ increases if we add an event. Say the event falls in bin $j$. If bin $j$ is beyond the region $r$, $t$ is unchanged. If bin $j$ is inside $r$, then
\begin{align*}
\label{eq:adding_delta}
\Delta(s) &= \ln \mu_j.
\end{align*}
Differentiating, $\Delta'(s) = \mu'_j(s)/\mu_j(s) \geq 0$ by the assumption in the proposition. Thus, by proposition \ref{thm:changedl}, $\tilde{t} \geq t$, as required. Next, we must prove that growing $r$ by adding parts of space without observed events \emph{decreases} $t$. This may add several bins $j$ to the likelihood, each of which is empty and thus induces a change
\begin{equation*}
\Delta(s) = - \mu_j(s) .
\end{equation*}
Differentiating, $\Delta'(s) = - \mu'_j(s) \leq 0$ by the assumption in the proposition. Thus, by proposition \ref{thm:changedl}, $\tilde{t} \leq t$, as required. 
\end{proof}

\section{Computational considerations}
\label{sec:compute}

This appendix discusses practical aspects of using deficit hawks in settings with limited computational power. 

\subsection{Testing infinite region sets}
The main text mentioned several infinite region sets such as `all intervals' or `all intervals starting at 0'. In practice, as noted earlier in \cite{optitv}, we need only scan intervals ending on observed events or dimension boundaries: 
\begin{proposition}
\label{claim:event_bounded}
For $H[I,T]$, with $I$ the set of all intervals and $T$ a countlike statistic, the lowest-scoring interval $r^*$ has an observed event or a dimension endpoint at each of its boundaries.
\end{proposition}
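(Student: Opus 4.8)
The plan is to show that the infimum of $T_r$ over the continuum of all intervals is attained by an interval whose two endpoints sit on observed events or on the dimension endpoints, using essentially only the second defining property of a countlike statistic (enlarging $r$ by event-free space never increases $T_r$). The setup is as follows. Projecting the dataset onto the one coordinate that defines the intervals gives finitely many event coordinates, say $x_1 < \cdots < x_N$, together with the dimension endpoints $x_\text{min}$ and $x_\text{max}$. Because there are only finitely many events, these are isolated points, and the complement of the events inside $[x_\text{min}, x_\text{max}]$ is a finite union of maximal \emph{event-free} open stretches. The key observation is that as an interval endpoint slides within one such stretch, the set of enclosed events is fixed, so $T_r$ varies only through the continuous dependence on $\mu_r$, and by the second countlike property it is monotonic: it can only decrease as the endpoint moves outward (enlarging $r$ by empty space).

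First I would take an arbitrary minimizer $r^\ast = [a,b]$ of $T_r$ over $I$ and treat its left endpoint. If $a$ already coincides with an event coordinate or with $x_\text{min}$, there is nothing to do. Otherwise let $c$ be the largest event coordinate strictly below $a$, or $c = x_\text{min}$ if no event lies to the left of $a$; then $(c,a)$ contains no events. Passing from $[a,b]$ to the interval with left endpoint at $c$ only adjoins the event-free region $(c,a)$, so the second countlike property gives $T \le T_{r^\ast}$. Since $r^\ast$ is a global minimizer this is an equality, and the enlarged interval is itself a minimizer whose left boundary now lies on an event or on $x_\text{min}$. Repeating the identical argument on the right endpoint, sliding $b$ out to the nearest event on its right or to $x_\text{max}$, again cannot raise $T$, and produces a minimizer with \emph{both} boundaries on events or dimension endpoints. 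Since there are only finitely many such intervals, the minimum is genuinely attained, and the continuum of intervals may be replaced by this finite scan without changing $H[I,T]$, which is exactly the claim.

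I expect the only real obstacle to be bookkeeping rather than substance. One must fix a convention for whether the event that bounds an optimal interval is counted as inside it: the minimum of $T$ along an event-free stretch is approached as the endpoint tends to the bounding event, so the bounding event should be treated as lying just \emph{outside} the interval (with a continuous PDF this is harmless, since a single point carries zero $\mu_r$ and affects only the integer count). It is also worth noting that this proposition needs only the second countlike property; the first property (adding an event does not lower $T$) is what explains why one cannot do better by sliding an endpoint \emph{across} an event, but it is not logically required for the existence argument above, because we start from an already-optimal interval and only ever adjoin empty space. Finally, the argument implicitly uses that the space is bounded so that the outward slide terminates at $x_\text{min}$ or $x_\text{max}$; for an unbounded coordinate one would instead assume $\mu_r \to \infty$ as $r$ grows so the optimum is not pushed off to infinity. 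The multidimensional version, with boxes in place of intervals, follows by applying the same sliding argument to one face at a time.
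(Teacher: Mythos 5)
Your proof is correct and takes essentially the same route as the paper's: expand a candidate interval across adjacent event-free space and invoke the second countlike property to conclude the optimum can be taken to be bounded by events or dimension endpoints. Your constructive version is in fact slightly more careful than the paper's one-line contradiction argument, since countlike monotonicity is non-strict (expansion can leave $T_r$ unchanged, so ``scores even lower'' is not guaranteed); your handling of ties and of attainment of the infimum over the continuum -- reducing to a finite family of event-bounded intervals -- is the cleaner way to state and use the result.
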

\begin{proof}
Assume the contrary, the we could expand $r^*$, adding space with no observed events but some expected events. Clearly this new interval is still in $I$, but because $T$ is countlike, it will score even lower than $r^*$, a contradiction.
\end{proof}

Thus, for example, testing `all intervals' implies testing $(N + 2)(N+1)/2$ intervals. Similarly, testing `all intervals including the left endpoint' only involves testing $N+1$ intervals.

If we use $t_0$ (or $P_n$, $\chi$, etc.) as $T$, we can use an additional property:
\begin{proposition}
\label{claim:largest_sparsest}
For $H[I, t_0]$, no interval has both more expected and fewer observed events than $r^*$.
\end{proposition}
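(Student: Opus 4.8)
The plan is to prove the contrapositive by contradiction: assume that $r^*$ is the lowest-scoring interval, yet some other interval $r'$ has both more expected events and fewer (or equal) observed events than $r^*$. I would then show this forces $t_0(r') \leq t_0(r^*)$ with the inequality strict enough to contradict $r^*$ being the unique minimizer --- or at least to show $r'$ scores no higher, so $r^*$ could be replaced by $r'$ and the claimed property holds for a valid minimizer. The key observation is that, from equation \ref{eq:t0}, the statistic $t_0$ depends on the interval only through the pair $(n, \mu)$, where $n$ is the observed count and $\mu$ the expected count in that interval. So the entire claim reduces to a monotonicity statement about the function $t_0(n, \mu)$.

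The central step is therefore to establish the two relevant monotonicities of $t_0$ directly. First I would note that $t_0$ is decreasing in $\mu$ at fixed $n$ (i.e.~$\partial t_0/\partial \mu \leq 0$) and increasing in $n$ at fixed $\mu$ (i.e.~$\partial t_0/\partial n \geq 0$); these are exactly the sign conditions already invoked in section \ref{sec:stat_choice} to show $t_0$ is countlike, so I may assume them. Given these, if $r'$ has $\mu' \geq \mu^*$ and $n' \leq n^*$, then stepping from $(n^*, \mu^*)$ to $(n', \mu^*)$ cannot increase $t_0$ (fewer events), and stepping from $(n', \mu^*)$ to $(n', \mu')$ cannot increase it either (more expected events). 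Hence $t_0(r') = t_0(n', \mu') \leq t_0(n^*, \mu^*) = t_0(r^*)$, so $r'$ scores at least as low as $r^*$. This directly contradicts the premise that $r^*$ is the strict minimizer, establishing the claim.

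The main subtlety --- and the step I expect to require the most care --- is the handling of ties and of discreteness. Because $n$ takes integer values, ``fewer observed events'' and the strictness of the minimizer interact: if $t_0(r') = t_0(r^*)$ exactly, then $r'$ is an equally good minimizer, and the statement should be read as asserting that there exists a lowest-scoring interval with the stated property, rather than that every tie-free minimizer does. I would phrase the conclusion to accommodate this, noting that among all lowest-scoring intervals we may always take $r^*$ to be one for which no competitor strictly dominates it in both coordinates. A second minor point is confirming that the monotonicity in $n$ is the correct direction for a \emph{deficit} statistic: since $t_0$ is signed so that deficits are negative and $H$ takes the minimum, a region preferred by the hawk is one with \emph{low} $t_0$, achieved by few events relative to expectation --- consistent with $\partial t_0/\partial n \geq 0$. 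With these conventions pinned down, the argument is a short two-step comparison and needs no computation beyond the already-established partial-derivative signs.
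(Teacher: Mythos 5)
Your proposal is correct and is essentially the paper's own proof: both reduce the claim to the monotonicity of $t_0$ in $(n,\mu)$ obtained by differentiating eq.~\ref{eq:t0}, and derive a contradiction with $r^*$ being the minimizer. Your worry about ties is unnecessary, since $t_0$ is \emph{strictly} decreasing in $\mu$ at fixed $n$, so an interval with strictly more expected and no more observed events would score strictly lower than $r^*$, contradicting even non-unique minimality.
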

\begin{proof}
From differentiating eq.~\ref{eq:t0}, we see that $t_0$ always decreases as $\mu$ grows or as $N$ shrinks. Thus, if an interval has more expected and fewer observed events than $r^*$, it must have a lower $t_0$, a contradiction.
\end{proof}

Unfortunately, this does not result in a large computational benefit.
Even if we only need to compute the full test statistic on $N + 1$ intervals (the largest gap, largest containing 1 event, etc), we still need to determine $N$ and $\mu$ in all the $(N + 2) (N+1)/2$ intervals to find this subset. Computing $t_0$ is not that expensive -- unlike $t$, but unfortunately proposition \ref{claim:largest_sparsest} does not hold for $t$. For example, an interval with a high known background expectation may give a lower $t$ than a larger interval with an equal or smaller number of observed events (but a smaller known background expectation).

\subsection{Estimating and reusing $Q_\alpha(s)$}
The main computational challenge is generally not computing $T$ for the observed data, but estimating $Q_{\alpha}(s)$ by Monte Carlo simulations.
Still, this is often tractable. Unlike for computing discovery significances, we do not need to estimate extremely low quantiles for upper limits -- e.g.~we may only need $\alpha=0.1$, not $\alpha=2.8\cdot 10^{-7}$ (for `five sigma' discoveries) -- so very large simulations are not needed.
Moreover, analysts could first do a rough simulation at a wide range of $s$, then add more simulation statistics post-hoc to refine the $Q_{\alpha}$ estimate near the observed upper limit, if they require the latter to high accuracy.

As an example, most of the computations for this paper were done on the author's laptop, a Dell XPS 15 from 2020. For $H[I, t_0]$, it took just over a minute to simulate 1000 toys per point at every $s$ on a 152-point grid from $0.1$ to $1200$. Most figures used more statistics to get smoother plots, but this is enough to estimate upper limits reasonably well.
The full likelihood ratio $t$ is more expensive; even if we test $S_5$ and run only a few hundred toys per point, similar simulations took several CPU-hours.

Yellin pointed out that a $Q_\alpha(s)$ computed on one problem can often be re-used on another, if it is related by a simple change of coordinates \cite{optitv}. For example, most one-dimensional experiments can transform coordinates so that the signal is uniform in $[0,1]$. If there are no \emph{known} backgrounds, then $Q_\alpha$ depends only on $s$ (and the region choice and test statistic), not the signal model's PDF. Thus, we needed only one $Q_\alpha(s)$ curve for each region set to make figure \ref{fig:regioncomp_plaw}.

If there \emph{are} known backgrounds, $Q_\alpha$ is usually unique to a problem. Figure \ref{fig:statscomp_known} was the most computationally expensive plot in this work (and the only one requiring cluster computing), since every gridpoint in that plot has a different known background, and therefore required estimating new $Q_\alpha(s)$ curves for most of the statistics shown.

The joint model method is an exception to this, but only for some region choices. 
For example, if we test all intervals, there is for each $s$ a transformation that makes the summed signal-background model uniform on $[0,1]$. In that space, we still test all intervals. Thus, $Q_\alpha(s)$ is a shifted version of the background-free $Q_\alpha(s)$: the curves match on models with the same total expected events.
However, the regions in $S_5$ are bounded by quintiles of the \emph{signal} model, not the summed signal-background model. Thus, if we test $S_5$, we cannot relate $Q_\alpha$ to that of a background-free experiment anymore.

\subsection{High statistics}
Some analyses have many more events than the $\mu(s) \lesssim 100-1000$ events that we considered for examples in this paper. Deficit hawks may still be perfectly usable in such cases, for several reasons.

First, a single analysis does not need to test the performance under hundreds of different background scenarios, as this work did. Thus, it can estimate $Q_\alpha(s)$ to much higher values within a similar computational expense.

Second, when testing only a fixed region set such as $S_5$ or hand-picked cut options, the number of intervals to test does not grow with $s$.
Changing from an ordinary likelihood analysis (which tests one region) to one that tests e.g.~three options for an important cut increases the computational cost by at most a factor three -- assuming the experiment is already using simulations to estimate or verify $Q_\alpha(s)$ curves. Thus, even if testing `all intervals' or another infinite region set is infeasible, deficit hawks can still be used.

Third, and perhaps most importantly, just because an experiment detects many events does \emph{not} mean it needs to know $Q_\alpha(s)$ to values with very high $\mu_\text{sig}(s)$. Most of the observed events should be clearly unlike the signal -- otherwise deficit hawks are not very beneficial anyway.
As long as the experiment is be powerful enough to exclude modestly sized signals, the computational requirements are tractable.
A large unknown background does not make $Q_\alpha(s)$ any harder to estimate.

Experiments that can only exclude large signals are unlikely to be limited by statistics.
Even drastic measures such as discarding 90\% or 99\% of the data may not significantly weaken their results. %
Extrapolating $Q_\alpha(s)$ is perhaps a more palatable solution. As shown in figure \ref{fig:q_alpha}, $Q_\alpha(s)$ seems to converge to an asymptote for fixed region sets such as $S_4$ or $S_{10}$. For some infinite region sets, $Q_\alpha(s)$ also changes only slowly with $s$ at high $\mu(s)$.

\section{Additional background scenarios}
\label{sec:extra_scen}

This appendix shows more examples of one-dimensional region choices, and tests them against different unknown background shapes than in section \ref{sec:region_choice}. Apart from the background shapes, the tests here use the same setup as in section \ref{sec:region_choice} and figure \ref{fig:regioncomp_plaw}. In particular, we choose coordinates so that the signal is flat in $x\in [0,1]$. 

Many experiments search for a peak-like signal on top of a smoother background, in a quantity such as energy. After transforming coordinates, the background will have a U-shape, perhaps such as shown in the insets of figure \ref{fig:extra_halfpipe} or \ref{fig:extra_halfpipe_asym}. The more slowly the background changes with energy, the more symmetric the U-shape will be.

A sensible region set for these situations would be all intervals including some central point: perhaps the median signal energy, or the energy at which the signal has maximum amplitude. Figures \ref{fig:extra_halfpipe} and \ref{fig:extra_halfpipe_asym} shows the sensitivity of this region set in green. The performance is closer to simply testing all intervals than in figure \ref{fig:regioncomp_plaw}, where tailored region choices performed more significantly better.

In these examples, even testing $r^*$ alone would not improve much over testing all intervals, so there is little to be gained from more refined region choices. For example, analysts expecting a flat or slowly-varying background could test only \emph{symmetric} intervals around the central point, i.e.~intervals with the same expected signal event fraction on either side of the central point. Figures \ref{fig:extra_halfpipe} and \ref{fig:extra_halfpipe_asym} show the performance of this in dashed green -- as expected, the difference is small.

Restricting the region set has only small benefits in the examples above, because the unknown component is firmly present everywhere. This causes limits to converge to a fraction of the unknown strength (set by the minimum of the unknown PDF) as the unknown strength increases, leveling out any differences between region sets (for sets flexible enough to allow $r^*$). 
In figure \ref{fig:regioncomp_plaw}, we saw that refined region choices yield the most benefit for large unknown components, but in the examples above, the leveling force of the convergence sets in earlier.

We can also see this in the examples of figure \ref{fig:extra_halfbad} and \ref{fig:extra_staircase}. In figure \ref{fig:extra_halfbad}, half the measured space is background-free, and restricting region choices works very well: for strong unknown backgrounds, testing all intervals is three times worse than testing $r^*$ (the right half of the space).
In figure \ref{fig:extra_staircase}, the unknown component is staircase-shaped (as in the test in figure 3b of \cite{optitv}), and the advantage of restricting regions is again modest to small. %

\clearpage

\begin{figure}[t]
    \centering
    \includegraphics[width=\columnwidth]{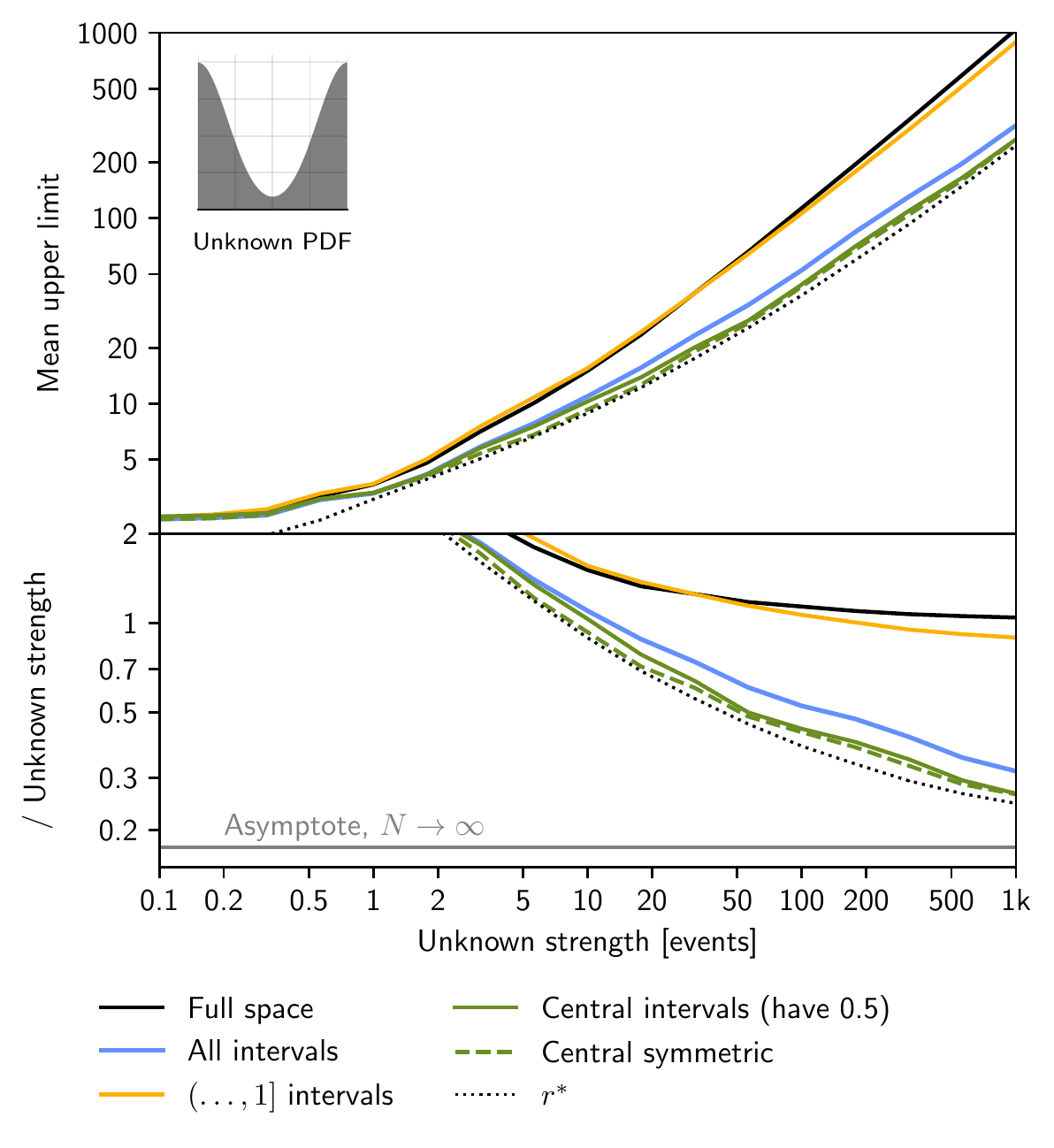}
    \caption{Mean 90\% confidence level upper limits on the expected event counts, for a U-shaped unknown component of different strengths.}
    \label{fig:extra_halfpipe}
\end{figure}

\begin{figure}[b]
    \centering
    \includegraphics[width=\columnwidth]{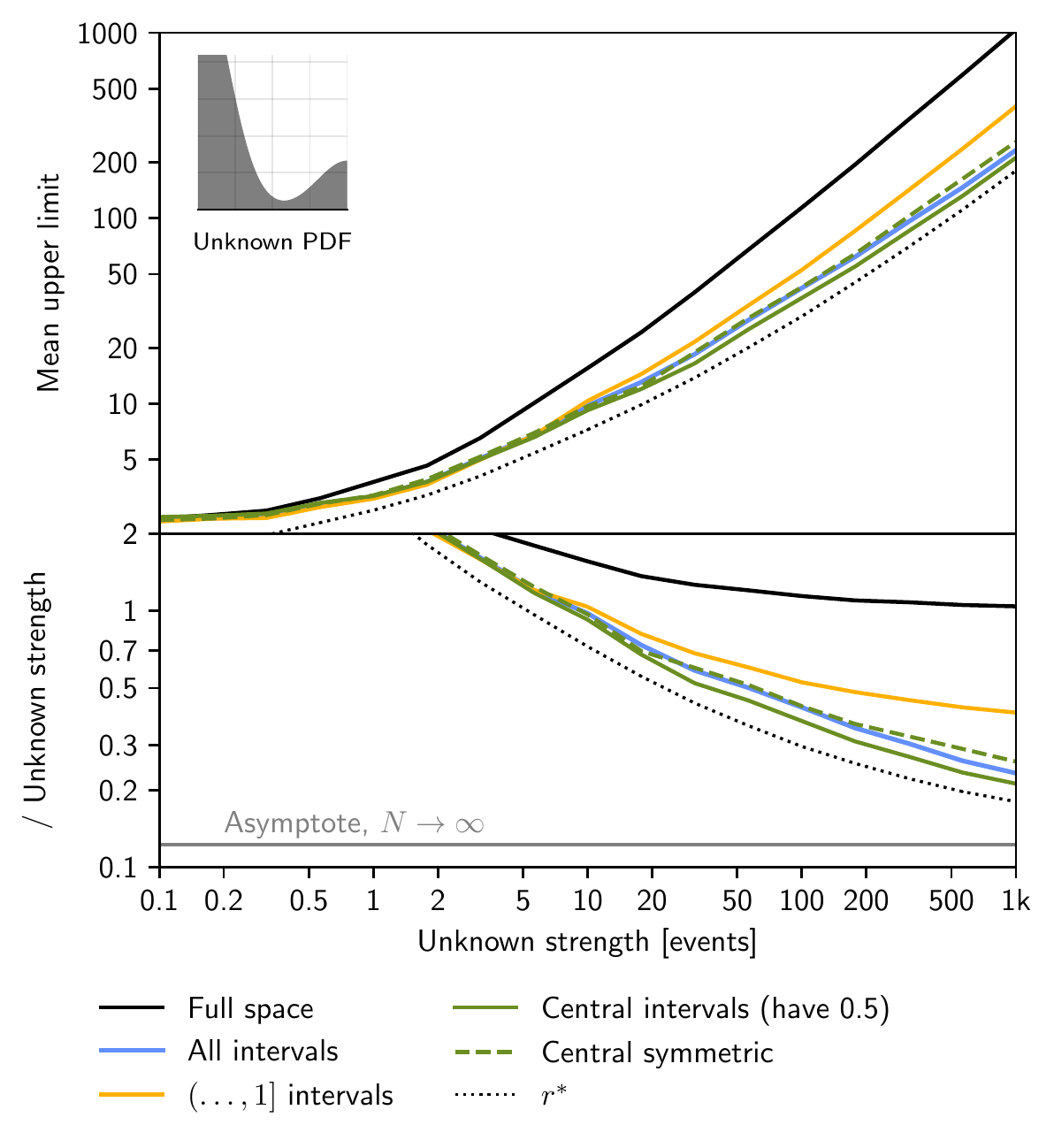}
    \caption{Mean upper limits for an asymmetric U-shaped unknown component.}
    \label{fig:extra_halfpipe_asym}
\end{figure}

\begin{figure}[b]
    \centering
    \includegraphics[width=\columnwidth]{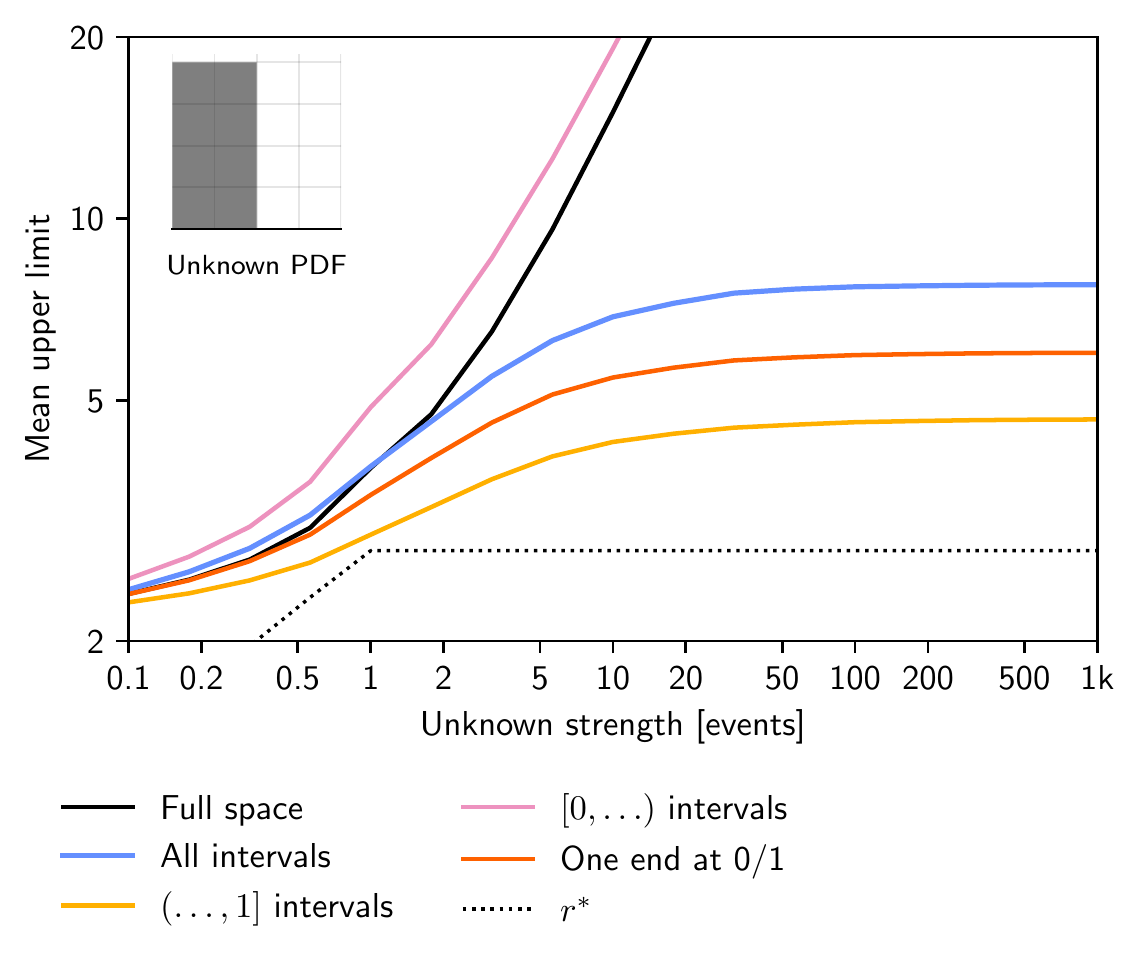}
    \caption{Mean upper limits for different region choices, for an unknown background in half the space.}
    \label{fig:extra_halfbad}
\end{figure}

\begin{figure}[b]
    \centering
    \includegraphics[width=\columnwidth]{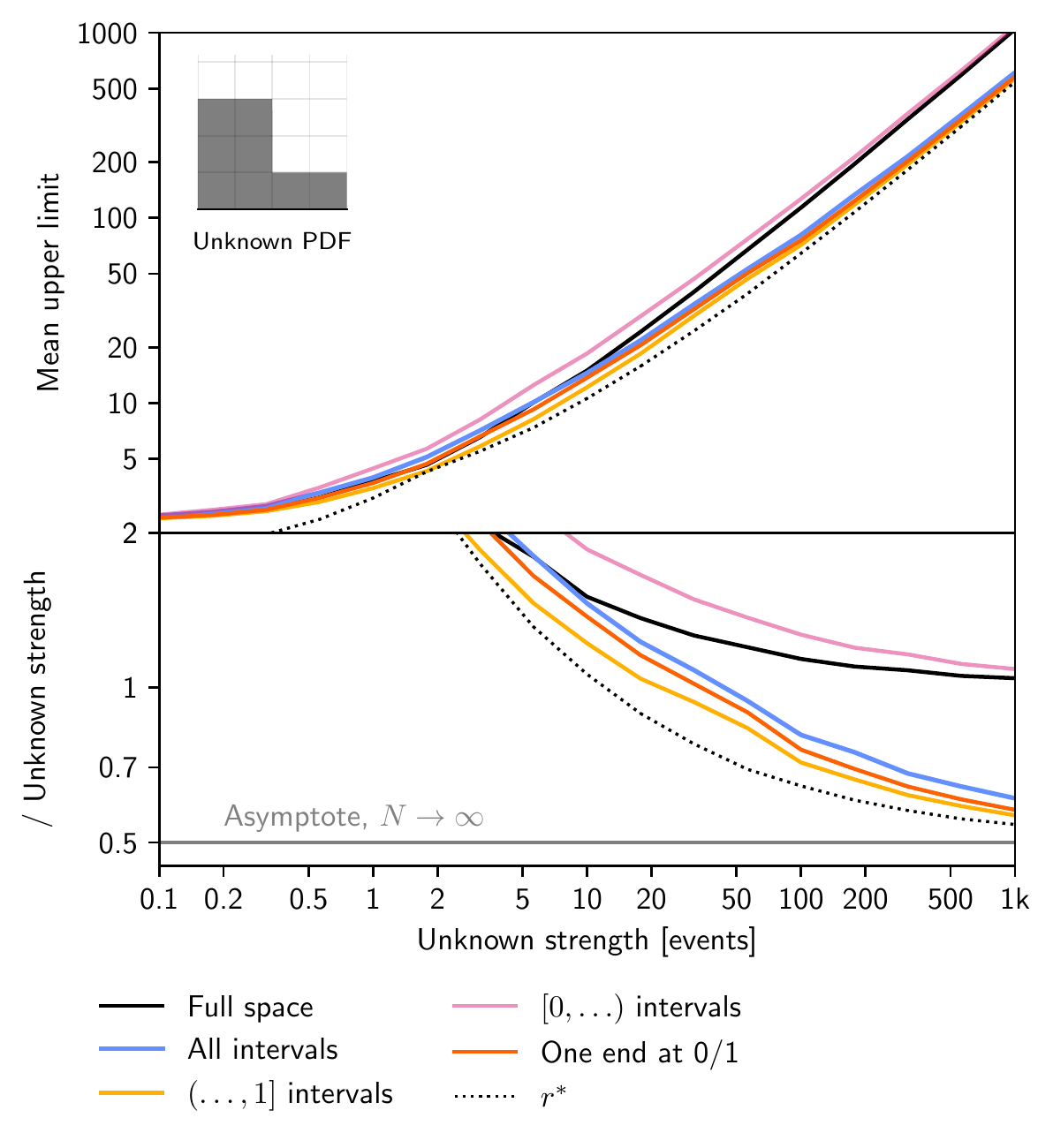}
    \caption{Mean upper limits for different region choices, for a staircase-shaped unknown component.}
    \label{fig:extra_staircase}
\end{figure}

\clearpage

\end{document}